\begin{document}

\title{
Occupancy distributions of homogeneous queueing systems under
opportunistic scheduling\thanks{An earlier version of this
manuscript appeared at the Information Theory and Applications
Workshop, UCSD, 2008.} }

\author{
Murat Alanyali and Maxim Dashouk\\
Department of Electrical and Computer Engineering \\
Boston University \\ }\date{} \maketitle \thispagestyle{empty}

\newtheorem{condition}{\indent \bf Condition}[section]
\newtheorem{property}{\indent \bf Property}[section]
\newtheorem{definition}{\indent \bf Definition}[section]
\newtheorem{conj}{\indent \bf Conjecture}[section]
\newtheorem{cor}{\indent \bf Corollary}[section]
\newtheorem{lemma}{\indent \bf Lemma}[section]
\newtheorem{claim}{\indent \bf Claim}[section]
\newtheorem{theorem}{\indent \bf Theorem}[section]
\newtheorem{prop}{\indent \bf Proposition}[section]
\newtheorem{remark}{\indent \bf Remark}[section]
\newtheorem{example}{\indent \bf Example}[section]

\newenvironment{proof}{\indent {\bf Proof.}}{\hfill$\bf \Box$\bigskip}
\newenvironment{proofl}{\indent {\bf Proof of Lemma }}{\hfill$\bf \Box$\bigskip}
\newenvironment{proofp}{\indent {\bf Proof of Proposition }}{\hfill$\bf \Box$\bigskip}
\newenvironment{prooft}{\indent {\bf Proof of Theorem }}{\hfill$\bf \Box$\bigskip}

\newcommand{\xt}[1][]{\mathbf{x}_{t #1}}
\newcommand{\bu}{\mathbf{u}}
\newcommand{\bb}{\mathbf{b}}
\newcommand{\bv}{\mathbf{v}}
\newcommand{\bx}{\mathbf{x}}
\newcommand{\bmm}{\mathbf{m}}
\newcommand{\be}{\mathbf{e}}
\newcommand{\bc}{\mathbf{c}}
\newcommand{\bm}{\mbox{\boldmath$\omega$}}
\newcommand{\bqa}{\begin{eqnarray}}
\newcommand{\eqa}{\end{eqnarray}}
\newcommand{\beq}{\begin{equation}}
\newcommand{\eeq}{\end{equation}}
\newcommand{\bqaa}{\begin{eqnarray*}}
\newcommand{\eqaa}{\end{eqnarray*}}

\renewcommand{\vec}[1]{\mbox{\boldmath$#1$}}

\begin{abstract}
We analyze opportunistic schemes for transmission
scheduling from one of $n$ homogeneous queues whose channel states
fluctuate independently. Considered schemes consist of the LCQ
policy, which transmits from a longest connected queue in the
entire system, and its low-complexity variants that transmit from
a longest queue within a randomly chosen subset of connected
queues. A Markovian model is studied where mean packet
transmission time is $n^{-1}$ and packet arrival rate is
\mbox{$\lambda<1$} per queue. Transient and equilibrium
distributions of queue occupancies are obtained in the limit as
the system size $n$ tends to infinity.
\end{abstract}


\newpage

\section{Introduction}

We analyze a queueing system that arises under opportunistic
scheduling of packet transmissions from a collection of queues
with time-varying service rates. The system of interest is
motivated by cellular data communications in which a single
transceiver serves multiple mobile stations through distinct
channels. Transmission scheduling has been well-studied in this
context under the guiding principle of opportunism, which broadly
refers to exploiting channel variations to maximize transmission
capacity in the long term. In this we paper consider two generic
opportunistic scheduling policies and obtain asymptotically exact
descriptions of the resulting queue length distributions in
symmetric systems of statistically identical queues.

Explicit analysis of queue lengths under opportunistic scheduling
is generally difficult due to model complexities and lack of
closed-form expressions. In related work Tassiulas and
Ephremides~\cite{EphTes} considered a queueing system under an
{\em on/off} channel model in which each queue is independently
either connected, and in turn it is eligible for service at a
standard rate, or disconnected and it cannot be serviced. It is
shown that transmitting from a longest connected queue stabilizes
queue lengths if that is at all feasible, and that it minimizes
occupancy of symmetric systems in which queues have identical load
and channel statistics. This policy is coined {\em LCQ}. Explicit
description of queue length distributions under LCQ is not
available, but several bounds for mean packet delay are obtained
in~\cite{Ganti,NeelyArl} for LCQ and some of its variants. In more
general models that admit multiple transmission rates and
simultaneous transmissions, {\em max-weight} scheduling policies
and their variations are shown in~\cite{Stolyar04,stoy} to
asymptotically minimize a range of occupancy measures along a
certain heavy-traffic limit. In the special case when one queue
can transmit at a time, max-weight transmits from a queue that
maximizes the product of instantaneous queue length and
transmission rate. Tails of queue length distributions under such
policies are studied in~\cite{Shakkottai08,Ying06} via large
deviations analysis.

Here we consider a system of $n$ queues under an on/off channel
model in which each queue is connected independently with
probability $q\in(0,1]$. A continuous-time Markovian model is
adopted where packet transmission rate is $n$ and packet arrival
rate is \mbox{$\lambda<1$} per queue. It can be seen that
$\lambda$ is also the load factor of the system; hence the
condition $\lambda<1$ is necessary to have positive-recurrent
queue lengths. We analyze this system for large values of the
system size $n$, under the LCQ scheduling policy and under its
low-complexity variant, namely $LCQ(d)$, that transmits from a
longest queue within $d\ge 1$ randomly selected connected queues.
It is apparent that LCQ($d$) is not particularly suitable for
non-symmetric systems, yet our goal here is to obtain a generic
evaluation of its underlying principle, which may be tailored to
specific circumstances.

We establish that as the system size $n$ increases equilibrium
distribution of queue occupancies under the LCQ policy converges
to the deterministic distribution centered at 0. Hence
asymptotically almost all queues are empty in equilibrium. The
number of queues with one packet is $\Theta(1)$ and the number of
queues with more than one packet is $o(1)$  as
$n\rightarrow\infty$. In particular maximum queue size tends to
one. The total number of packets in the system is therefore given
by the number of nonempty queues, and this number is shown to have
the same equilibrium distribution as the positive-recurrent
birth-death process with birth rate $\lambda$ and death rate
$1-(1-q)^j$ at state $j\in \mathbb{Z}_+$. Note that the latter
rate is equal to the probability of having at least one connected
queue within a given set of $j$ queues, and that the nature of the
total system occupancy may be anticipated once maximum queue size
is determined to be one. The obtained description leads to
asymptotic mean packet delay via Little's law as the rate of
packet arrivals to the system is readily seen to be $n\lambda$.

The analysis technique applied to LCQ can be extended, although
with excessive tediousness, to symmetric max-weight policies in
cases when each queue can be serviced independently at rate $nR$
for some random variable $R$. Above conclusions about LCQ offer
substantial insight about queue occupancies in that more general
setting. Namely if $R$ exceeds $\lambda$ with positive probability
(note that this condition is necessary for positive recurrence of
queue lengths), then stochastic coupling with a related LCQ system
yields that the maximum queue length in equilibrium tends to 1 as
$n\rightarrow\infty$. In~turn, equilibrium distribution of total
system occupancy should be expected to resemble that of a
birth-death process with birth rate $\lambda$ and death rate
$E[\max\{R_1,R_2,\cdots,R_j\}]$ at state $j$, where
$R_1,R_2,\cdots,R_j$ are independent copies of $R$.

We obtain the equilibrium distribution $\{p_k\}_{k=0}^\infty$ of
individual queue occupancy under the LCQ($d$) policy in the limit
as $n\rightarrow\infty$. Specifically $p_k=v^*_k-v^*_{k+1}$ where
$v^*_0=1$ and \[v^*_k=1-\sqrt[d]{1-\lambda v^*_{k-1}},~~~~~~k =
1,2,\cdots. \] This distribution is shown to have tails that decay
as $\Theta((\lambda/d)^k)$ as queue size $k\rightarrow\infty$.
Hence, in terms of tail occupancy probabilities, the choice
parameter $d$ has the equivalent effect of reducing the system
load by the same factor. Yet, for any fixed $d$, system occupancy
under LCQ($d$) is larger than that of LCQ by a factor of order
$n$. Numerical values of asymptotic mean packet delay under LCQ
and LCQ($d$) are illustrated in Figure~\ref{fig.delay}. We also
conclude that if $d$ is allowed to depend on $n$ so that
$d\rightarrow\infty$ but $d/n\rightarrow 0$ then order of the
alluded disparity reduces to $n/d$. This suggests that for
moderate values of $d$ and $n$ LCQ($d$) and LCQ may be expected to
have comparable packet delay.

The present analysis is based on approximating system dynamics via
asymptotically exact differential representations that amount to
functional laws of large numbers. Hence besides the mentioned
equilibrium properties the paper also describes transient behavior
of queue occupancies. The present analysis of LCQ($d$) is inspired
by the work of Vvedenskaya et al.~\cite{vveden} which concerns an
analogue of this policy that arises in routing and load balancing.
It should perhaps be noted that the choice parameter $d$ appears
to have a substantially more pronounced effect in the routing
context. Our conclusions about the LCQ policy require a more
elaborate technical approach. Here we apply a technique due to
Kurtz~\cite{Kurtz92} to obtain a suitable asymptotic description
of the system. Related applications of this technique can be found
in~\cite{MA03,HuntKurtz94,Zachary02}.

\begin{figure*}
\label{fig.delay}\centering \hspace*{-0.9cm}
\subfigure[LCQ]{\includegraphics[width=3.8in]{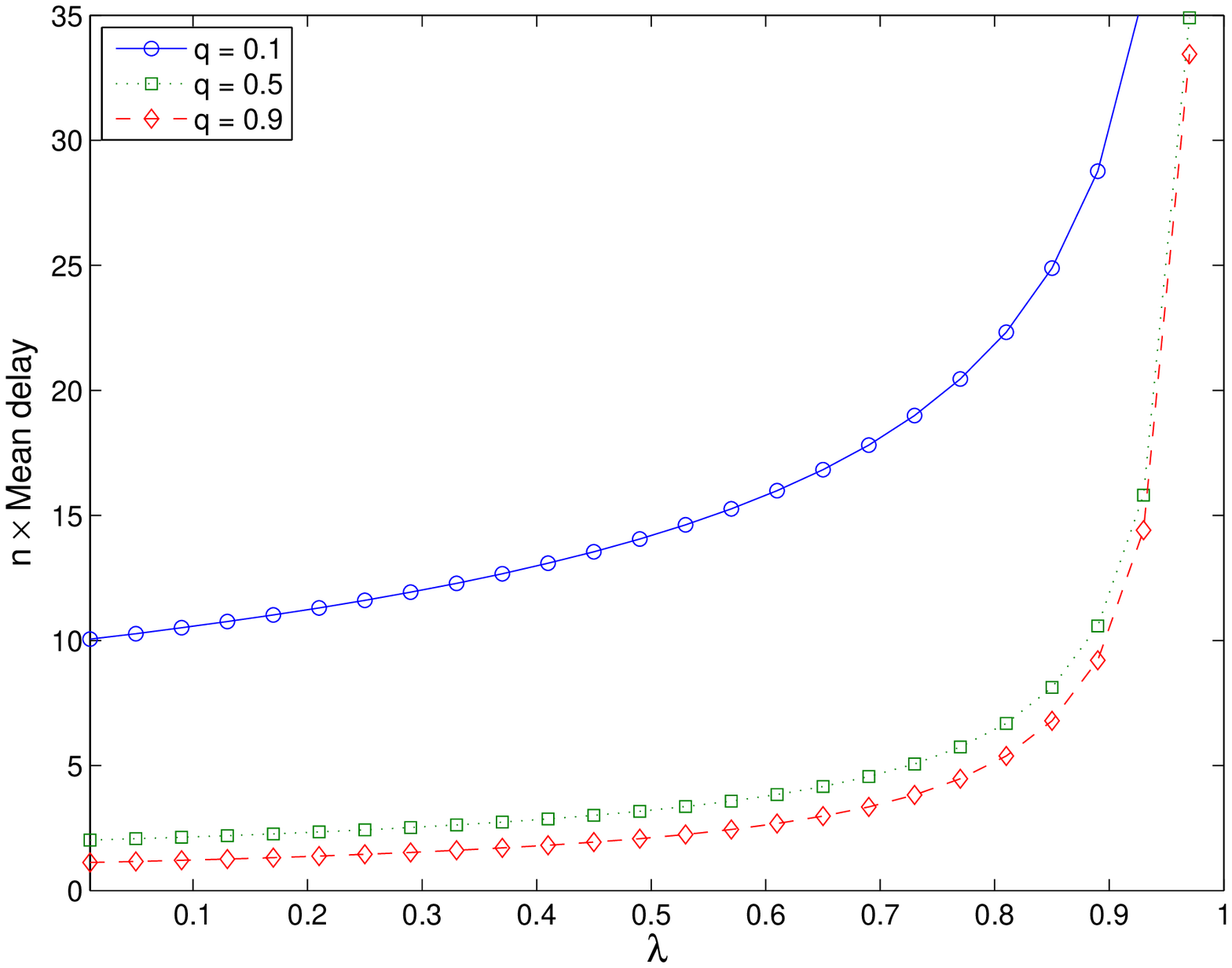}
\label{fig.delayLCQ}}\hspace*{-1cm}
\subfigure[LCQ($d$)]{\includegraphics[width=3.8in]{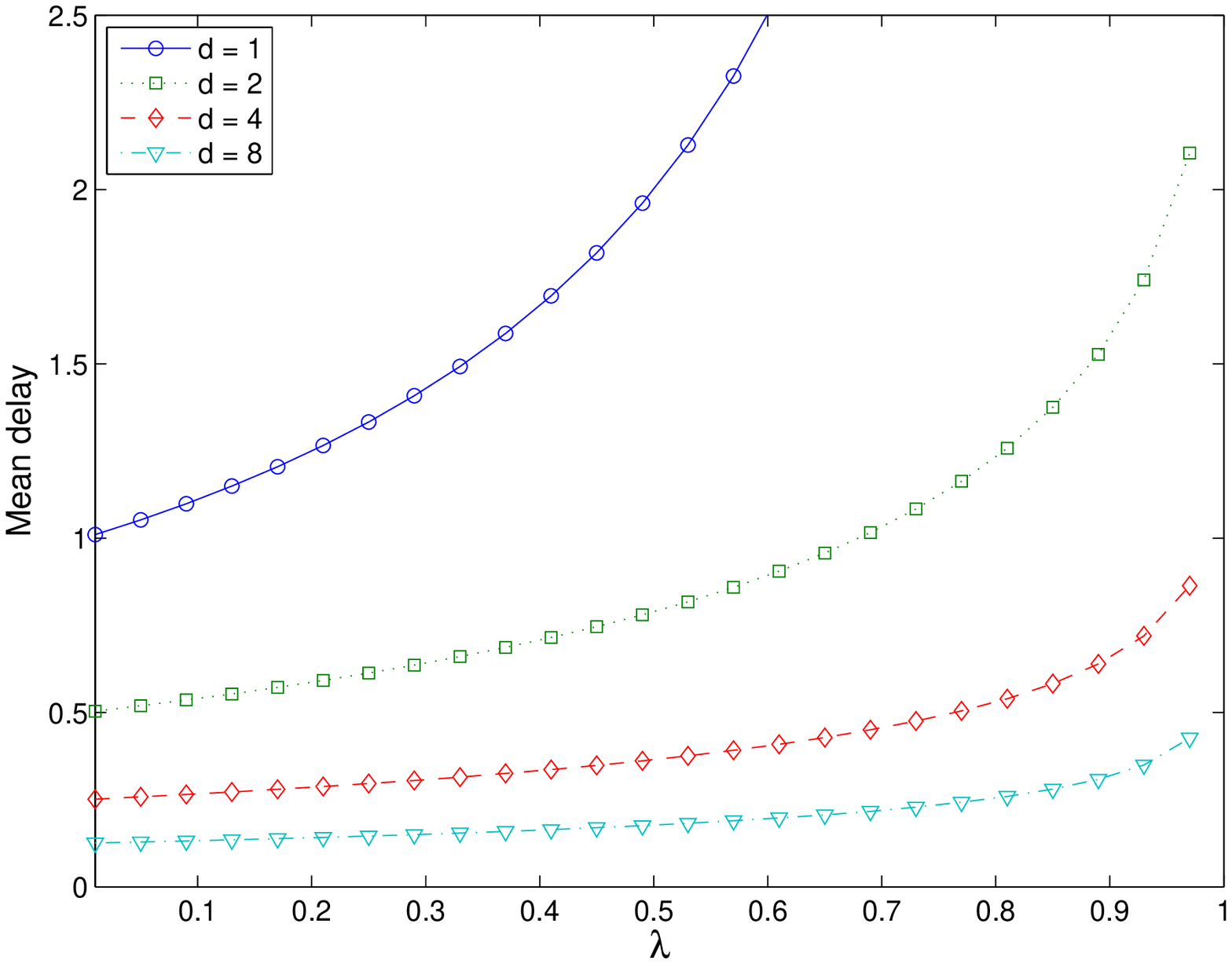}
\label{fig.2average}} \caption{Mean packet delays as
$n\rightarrow\infty$. Note the normalization in (a).}
\end{figure*}

The rest of this paper is organized as follows. We continue in
Section~\ref{sec.model} with formal description of the model and
the notation adopted in the paper. The policies LCQ$(d)$ and LCQ
are analyzed respectively in Sections~\ref{sec.LCQ-d} and
\ref{sec.LCQ}. The paper concludes with final remarks in
Section~\ref{sec.conc}.

\section{Queueing Model}\label{sec.model}

Consider $n$ queues each serving a dedicated stream of packet
arrivals as illustrated in Figure~\ref{fig.1system}. Arrivals of
each stream occur according to an independent Poisson process with
rate $\lambda<1$ packets per unit time and transmission time of
each packet is exponentially distributed with mean $n^{-1}$,
chosen independently of the prior history of the system. Each
queue is serviced by a designated channel but at most one channel
can transmit at a time. Channel states fluctuate randomly and each
channel is eligible for transmission with probability~$q\in (0,1]$
independently of other channels. Queues with eligible channels are
called {\em connected}. We assume that channel states remain
constant during packet transmission and that they are determined
anew, independently of each other and of the current queue
lengths, just before the next transmission decision.

\begin{figure}\vspace*{-0.5cm}\begin{center}\hspace*{-0cm}
\epsfig{file=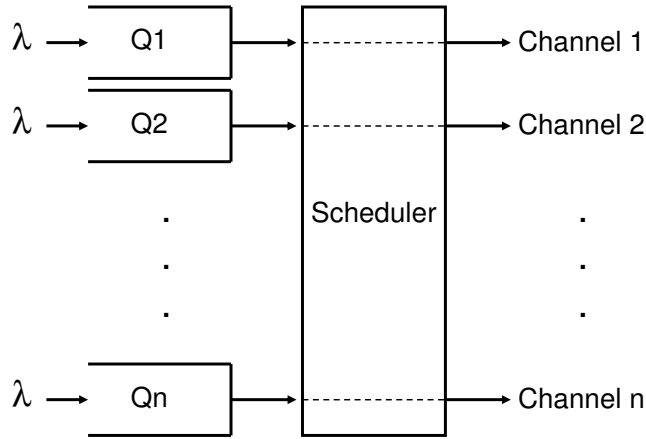, width=4in}\end{center} \vspace*{-0.8cm}
\caption{Sketch of the considered queueing system. At most one
queue is serviced at a time, and each queue~$i$ can be serviced
only when channel $i$ is eligible for transmission.}
\label{fig.1system}\end{figure}

Let $m_k(t)$ denote the number of queues with $k$ or more packets
at time~$t$, and let
\[u_k(t)=\frac{m_k(t)}{n},~~~~~k=0,1,2,\cdots.\]
be the fraction of such queues in the system. In particular
\begin{equation}
1 = u_0(t)\geq u_1(t)\geq\cdots\geq 0,\label{eq.u}
\end{equation}
the sequence $\{1-u_k(t)\}_{k=0}^\infty$ is the empirical
cumulative distribution function of queue occupancies, and
$\sum_{k\ge 1}u_k(t)$ is the empirical average queue occupancy in
the system at time~$t$. We denote by $P_n$ and $E_n$ respectively
probabilities and expectations associated with system size $n$. In
particular if $q_i(t)$ denotes the occupancy of the $i$-th queue
at time $t$ then by the symmetry of the model
\[E_n[u_k(t)]=P_n(q_i(t)\ge k).\]

Let  $U$ denote the collection of sequences
$\bu=\{u_k\}_{k=0}^\infty$ that satisfy relation~(\ref{eq.u}), and
endow $U$ with metric $ \rho$ that is defined by
\[\rho(\bu,\bu')=\sup_{k>0}\frac{|u_k-u_k'|}{k},~~~~~\bu,\bu'\in U.\]
Note that convergence in $U$ is equivalent to coordinate-wise
convergence, and that $U$ is compact as each coordinate lies in a
compact interval.

For each time $t$ let $\bu(t)$ denote the sequence
$\{u_k(t)\}_{k=0}^\infty$. We represent the trajectory
$(\bu(t):t\ge 0)$ by the symbol $\bu(\cdot)$, and say that
$\bu(\cdot)$ converges to a given trajectory $\bv(\cdot)$
uniformly on compact time-sets (uoc) if for all $t>0$
\[\sup_{0\le s\le t}\rho(\bu(s),\bv(s))\rightarrow 0~~\mbox{ a.s. as } n\rightarrow\infty.\]

\section{LCQ($d$)}\label{sec.LCQ-d}

We focus on LCQ($d$) which  randomly and independently selects $d$
connected queues and transmits from a longest queue within this
collection. For convenience of analysis we assume that repetitions
are allowed in the selection procedure, and that if all selected
queues are empty or no connected queue exists at a scheduling
instant then the scheduler makes a new selection after idling for
the transmission time of a hypothetical packet. This latter
assumption can be seen to imply that scheduling instances form a
Poisson process of rate $n$.

For $k= 1,2,3,\cdots$ let $\be_k=\{e_k(i)\}_{i=0}^\infty$ where
\[e_k(i)=1\{i=k\}.\] Here and in the rest of the paper
$1\{\cdot\}$ denotes 1 if its argument is true and 0 otherwise.
Jumps of the process $\bu(\cdot)$ are of the form $\pm
n^{-1}\be_{k}$ for some $k$. Namely $\bu(\cdot)$ changes by
$+n^{-1}\be_{k}$ whenever some queue with exactly $k-1$ packets
has a new arrival, and by $-n^{-1}\be_k$ whenever a packet
transmission is scheduled from a queue with exactly $k$ packets.
The number of queues with $k-1$ packets at time $t$ is given by
$n(u_{k-1}(t)-u_k(t))$; hence the former event occurs at
instantaneous rate $n\lambda(u_{k-1}(t)-u_k(t))$. The latter event
occurs if and only if, upon completion of a packet transmission,
$(i)$ there exists a connected queue and $(ii)$ the scheduler
inspects at least one connected queue with $k$ packets but none
with more than $k$ packets. To determine the instantaneous rate of
this event let $\tau$ be a scheduling instant and let
\[\alpha_n=1-(1-q)^n.\] Namely $\alpha_n$ is the probability that
there exists a connected queue at time $\tau$. Since channel
states are assigned independently of queue lengths at time $\tau$,
a connected queue at this time has strictly less than $k$ packets
with probability $1-u_{k}(\tau)$. Therefore, given that a
connected queue exists, the maximum queue length inspected by the
scheduler is equal to $k$ with (conditional) probability
$(1-u_{k+1}(\tau))^d-(1-u_k(\tau))^d$. Since scheduling instants
occur at constant rate $n$, instantaneous rate of transmissions
from a queue of size $k$ is
$n\alpha_n((1-u_{k+1}(t))^d-(1-u_k(t))^d)$. In particular
$\bu(\cdot)$ is a time-homogeneous Markov process whose generator
can be sketched as \beq \bu\leftarrow\left\{\begin{array}{ll}
\bu+n^{-1}\be_k&\mbox{ at rate }~~n\lambda(u_{k-1}-u_k)\\
\bu-n^{-1}\be_k&\mbox{ at rate
}~~n\alpha_n((1-u_{k+1})^d-(1-u_k)^d),
\end{array}\right.~~~~~~k=1,2\cdots.\label{eq.LcqdGenerator}
\eeq

It offers some convenience in the subsequent discussion to
represent the process $\bu(\cdot)$ via the ``random time change"
construction of~\cite[Chapter 6]{EthierKurtz}. Namely \bqa
\bu(t)&=&\bu(0)+\sum_{k=1}^{\infty}n^{-1}\be_kA_{k-1}\left(n\lambda\int_0^tu_{k-1}(s)-u_k(s)ds\right)\nonumber\\
&&~~~~~~~~~~~~-\sum_{k=1}^{\infty}n^{-1}\be_kD_{k}\left(n\alpha_n\int_0^t(1-u_{k+1}(s))^d-(1-u_k(s))^d
ds\right) \label{eq.RTCconst} \eqa where
$A_{k-1}(\cdot),~D_k(\cdot)$, $k=1,2,\cdots$, are mutually
independent Poisson processes each with unit rate. In informal
terms, the processes $A_k(\cdot)$ and $D_k(\cdot)$ clock
respectively arrivals to and departures from some queue with
length $k$, and the construction~(\ref{eq.RTCconst}) is based on
suitably expediting these processes to match the instantaneous
transition rates given in~(\ref{eq.LcqdGenerator}). Martingale
decomposition of the Poisson processes used in~(\ref{eq.RTCconst})
yields \beq
\bu(t)=\bu(0)+\sum_{k=1}^{\infty}\be_k\int_0^t\left(\lambda(u_{k-1}(s)-u_k(s))ds-\alpha_n((1-u_{k+1}(s))^d-(1-u_k(s))^d)\right)ds
+\vec{\varepsilon}(t), \label{eq.MGrepForU} \eeq where
$\vec{\varepsilon}(t)=\{\varepsilon_k(t)\}_{k=0}^\infty$ is such
that each coordinate process $\varepsilon_k(\cdot)$ is a
real-valued martingale adapted to the filtration generated by
$\bu(\cdot)$.

\begin{theorem}
Every subsequence of $\{n\}$ has a further subsequence along which
$\bu(\cdot)$ converges in distribution to a differentiable process
$\bv(\cdot)$ such that $v_0(t)\equiv 1$ and
\begin{equation}
\frac{d}{dt}v_k(t)=\lambda(v_{k-1}(t)-v_k(t))-(1-v_{k+1}(t))^d+(1-v_{k}(t))^d~~~~~~k=1,2\cdots.\label{eq.limit}
\end{equation}
\label{thm.fluid}
\end{theorem}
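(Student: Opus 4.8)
The plan is to prove relative compactness of the family $\{\bu(\cdot)\}$ and then identify every limit point as a solution of~(\ref{eq.limit}), in the spirit of the functional law of large numbers for jump Markov processes in~\cite{EthierKurtz}. Since $U$ is compact and, as already noted, convergence in $U$ is coordinatewise, I would argue one coordinate at a time and then reassemble. The starting point is the martingale representation~(\ref{eq.MGrepForU}), and the first task is to show its martingale term is negligible. Writing $\varepsilon_k(\cdot)$ as the difference of the two compensated unit-rate Poisson martingales attached to $A_{k-1}$ and $D_k$ in~(\ref{eq.RTCconst}), and suppressing time arguments, its predictable quadratic variation is
\[
\langle\varepsilon_k\rangle(t)=n^{-1}\int_0^t\Big(\lambda(u_{k-1}-u_k)+\alpha_n\big((1-u_{k+1})^d-(1-u_k)^d\big)\Big)\,ds\le n^{-1}(\lambda+1)t,
\]
where the bound uses~(\ref{eq.u}) to control each integrand by $1$. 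Doob's maximal inequality then gives $E_n[\sup_{s\le t}\varepsilon_k(s)^2]\le 4n^{-1}(\lambda+1)t\to0$, so $\sup_{s\le t}|\varepsilon_k(s)|\to0$ in probability for each fixed $k$ and $t$.

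Next I would establish tightness. From~(\ref{eq.MGrepForU}) and~(\ref{eq.u}) the drift of the $k$-th coordinate is bounded in absolute value by $\lambda+\alpha_n\le\lambda+1$, so
\[
|u_k(t)-u_k(s)|\le(\lambda+1)\,|t-s|+|\varepsilon_k(t)-\varepsilon_k(s)|.
\]
Combined with the quadratic-variation bound above, this verifies the standard Aldous--Kurtz oscillation criterion for the real-valued process $u_k(\cdot)$. Since $U\subset[0,1]^\infty$ is compact, compact containment holds automatically, and because the topology on $U$ is the coordinatewise one, tightness of each coordinate assembles into tightness of the $U$-valued family $\{\bu(\cdot)\}$. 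By Prokhorov's theorem every subsequence of $\{n\}$ therefore has a further subsequence along which $\bu(\cdot)$ converges in distribution to some process $\bv(\cdot)$.

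It remains to identify the limit. Each jump of $\bu(\cdot)$ changes a single coordinate by $n^{-1}$ and hence moves $\bu$ by at most $n^{-1}$ in $\rho$, so the maximal jump vanishes and any limit $\bv(\cdot)$ has continuous paths. Invoking the Skorokhod representation theorem, I would realize the converging processes on one probability space with $\bu(\cdot)\to\bv(\cdot)$ uoc almost surely. Passing to the limit in~(\ref{eq.MGrepForU}) coordinatewise is then routine: the martingale term vanishes by the first step, $\alpha_n=1-(1-q)^n\to1$, and bounded convergence applies since each integrand is continuous in $\bu$ and bounded by~$1$. This yields, for every $k\ge1$,
\[
v_k(t)=v_k(0)+\int_0^t\Big(\lambda(v_{k-1}(s)-v_k(s))-(1-v_{k+1}(s))^d+(1-v_k(s))^d\Big)\,ds,
\]
with $v_0\equiv1$ inherited from $u_0\equiv1$. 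Continuity of $\bv(\cdot)$ makes the integrand continuous in $s$, whence each $v_k$ is continuously differentiable and~(\ref{eq.limit}) holds.

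I expect the main obstacle to be the infinite-dimensional bookkeeping rather than any single estimate: one must be sure that the coordinatewise tightness and the coordinatewise limit passage genuinely combine into convergence of the $U$-valued trajectories in the Skorokhod sense. The compactness of $U$ together with the equivalence of $\rho$ with the product topology is exactly what licenses this reduction, and once the limit is known to be path-continuous the identification of~(\ref{eq.limit}) is a bounded-convergence argument carried out simultaneously in every coordinate.
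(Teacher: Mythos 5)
Your proposal is correct and takes essentially the same route as the paper's own proof: tightness of $\bu(\cdot)$ in the Skorokhod space over the compact space $U$, Skorokhod representation to upgrade to almost sure uoc convergence with a continuous limit (since jumps have size $n^{-1}$), vanishing of the martingale term via its quadratic variation together with Doob's $L^2$ inequality, and passage to the limit in~(\ref{eq.MGrepForU}) using $\alpha_n\rightarrow 1$ followed by differentiation of the resulting integral identity. The only difference is one of explicitness---you compute $\langle\varepsilon_k\rangle$ and verify the tightness criterion directly where the paper cites~\cite{EthierKurtz}---so no further comparison is needed.
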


\begin{proof}
The sequence of processes $\bu(\cdot):n=1,2,\cdots$ is tight in
the Skorokhod space $D_U[0,\infty)$ of right continuous functions
with left limits in $U$~\cite[Chapter 3.5]{EthierKurtz}. Therefore
every subsequence has a further subsequence that converges in
distribution. By Skorokhod's Embedding Theorem~\cite[Theorem
3.1.8]{EthierKurtz} the processes can be reconstructed in an
appropriate probability space if necessary so that the convergence
occurs almost surely. Since jumps of $\bu(\cdot)$ have magnitudes
that scale with $n^{-1}$, the limit process is continuous and
convergence of $\bu(\cdot)$ can be taken uoc~\cite[Theorem
3.10.1]{EthierKurtz}. To describe a limit process $\bv(\cdot)$
note that $u_0(t)\equiv 1$, and so $\varepsilon_0(t)\equiv 0$. For
$k=1,2,\cdots$ the martingale $\varepsilon_k(\cdot)$ is square
integrable. This process has $O(n)$ jumps per unit time and each
jump is of size $n^{-1}$, hence its quadratic variation vanishes
as $n\rightarrow\infty$. In turn, Doob's $L^2$
inequality~\cite[Proposition 2.2.16]{EthierKurtz} implies that
$\varepsilon_k(\cdot)\rightarrow 0$ uoc as $n\rightarrow\infty$.
Since $\alpha_n\rightarrow 1$ and convergence of $\bu(\cdot)$ is
uoc, the $k$th integral in equality~(\ref{eq.MGrepForU}) converges
to
\[\int_0^t\left(\lambda(v_{k-1}(s)-v_k(s))ds-((1-v_{k+1}(s))^d-(1-v_k(s))^d)\right)ds.\]
Therefore $\bv(\cdot)$ satisfies equality~(\ref{eq.MGrepForU})
with $\alpha_n=1$ and $\vec{\varepsilon}(t)\equiv\vec{0}$.
Differential representation of that equality is~(\ref{eq.limit}).
\end{proof}

Let $U_o$ denote the set of system states in which average queue
occupancy is finite. That is,
\[U_o=\{\bu\in U: \sum_{k=1}^{\infty}u_k<\infty\}.\]
Let $\bv^*=\{v^*_k\}_{k=0}^{\infty}\in U$ be defined by setting
$v^*_0=1$ and \beq v^*_k=1-\sqrt[d]{1-\lambda v^*_{k-1}},~~~~~~k =
1,2,\cdots. \label{eq.vDef} \eeq Since $1-\sqrt[d]{1-\lambda
v^*_{k-1}}\le \lambda v^*_{k-1}$  it follows that
$v^*_k\le\lambda^k$; in particular $\bv^*\in U_o$. It can be
readily verified by substitution that $\bv^*$ is an equilibrium
point for the differential system~(\ref{eq.limit}). The following
lemma establishes that $\bv^*$ is the unique stable equilibrium
for trajectories that start in $U_o$.

\begin{lemma}
Let $\bv(\cdot)$ solve the differential system~(\ref{eq.limit})
with initial state $\bv(0)\in U_o$. Then \beq
\lim\limits_{t\rightarrow \infty}v_k(t)=v^*_k,~~~~~~~k=1,2,\cdots.
\label{eq.limit_v_k} \eeq \label{lemma.limit}
\end{lemma}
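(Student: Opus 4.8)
The backbone of my argument is the observation that the vector field on the right of (\ref{eq.limit}) is \emph{cooperative}. Writing $g_k(\bv)$ for the $k$th right-hand side, one has $\partial g_k/\partial v_{k-1}=\lambda\ge 0$ and $\partial g_k/\partial v_{k+1}=d(1-v_{k+1})^{d-1}\ge 0$, while $\partial g_k/\partial v_k\le 0$. Thus the off-diagonal dependence is monotone (the Kamke quasimonotonicity condition), so the flow of (\ref{eq.limit}) preserves the coordinatewise order: if $\bv(0)\le\bw(0)$ then $\bv(t)\le\bw(t)$ for all $t\ge 0$. In this infinite-dimensional setting the comparison principle is justified by truncating to the first $N$ coordinates and passing to the limit; local Lipschitz continuity of $g$ also gives uniqueness of solutions, so the flow is well defined. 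This order preservation, together with an identification of the fixed points, is what I would use throughout.

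I would next pin down the equilibria inside $U_o$. Summing the stationarity relations $g_k(\bv)=0$ from index $k$ to infinity and telescoping---which is legitimate precisely because $\bv\in U_o$ forces $v_k\to 0$---collapses the second-order stationarity equations to the first-order recursion $(1-v_k)^d=1-\lambda v_{k-1}$ with $v_0=1$. This is exactly (\ref{eq.vDef}), so $\bv^*$ is the \emph{unique} fixed point in $U_o$. It is worth noting that $\mathbf 1$ is also a fixed point of (\ref{eq.limit}) but lies outside $U_o$; ruling out convergence to this congested state is the whole point of the hypothesis $\bv(0)\in U_o$.

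The lower bound is then immediate. Let $\underline{\bv}(\cdot)$ denote the solution started at $\vec{0}$. Since $g_k(\vec{0})\ge 0$ for every $k$ (indeed $g_1(\vec{0})=\lambda$ and $g_k(\vec{0})=0$ for $k\ge 2$), the state $\vec{0}$ is a sub-solution, and because the autonomous flow is order preserving this forces $\underline{\bv}(\cdot)$ to be nondecreasing in $t$. Comparison with the constant trajectory $\bv^*$ (note $\vec{0}\le\bv^*$) keeps $\underline{\bv}(t)\le\bv^*$, so $\underline{\bv}(t)$ increases to a fixed point lying below $\bv^*$, hence inside $U_o$, hence equal to $\bv^*$. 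Finally $\vec{0}\le\bv(0)$ for every admissible initial state, so comparison gives $v_k(t)\ge\underline{v}_k(t)$ and therefore $\liminf_{t\to\infty}v_k(t)\ge v_k^*$ for all $k$---and this half needs no assumption on $\bv(0)$.

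The matching upper bound $\limsup_{t\to\infty}v_k(t)\le v_k^*$ is the crux. The natural plan is to dominate $\bv(0)$ by a super-solution in $U_o$ whose own trajectory decreases monotonically to $\bv^*$; the shifted equilibria $\bar{v}_k=1$ for $k\le s$ and $\bar{v}_k=v^*_{k-s}$ for $k>s$ are such super-solutions, since $g_k(\bar{\bv})\le 0$ for every $k$. The obstacle is that $U_o$ permits arbitrarily slow, sub-geometric decay of $v_k(0)$, whereas every equilibrium decays geometrically, so no single super-solution of this form dominates a general $\bv(0)\in U_o$. To close this gap I would bring in the total-occupancy balance $\frac{d}{dt}\sum_{k\ge1}v_k(t)=\lambda-\bigl(1-(1-v_1(t))^d\bigr)$, which shows the trajectory remains in $U_o$ for all time and, combined with the already-established lower bound on $v_1$, lets one control the tail mass so that after a finite time $\bv(\cdot)$ is dominated by an appropriate super-solution. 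Equivalently, one argues on the compact space $U$ that every $\omega$-limit point is a fixed point and uses the total-occupancy bound to confine these limit points to $U_o$, whence they all coincide with $\bv^*$. Showing that the tail cannot persistently feed the congested state $\mathbf 1$ is the delicate step, and it is exactly here that $\bv(0)\in U_o$ is used.
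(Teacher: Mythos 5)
Your skeleton is sound where it overlaps the paper: your cooperative/Kamke comparison principle is exactly the paper's Lemma~\ref{lemma.monotone} (proved there by a first-crossing argument rather than truncation), your telescoping identification of $\bv^*$ as the unique fixed point in $U_o$ is correct, and your lower bound via the monotone trajectory started from the empty state is valid and needs no hypothesis on $\bv(0)$. The genuine gap is the upper bound, which you correctly flag as the crux but do not close, and your first proposed repair provably fails: the flow cannot regularize sub-geometric tails in finite time. Since $\bv(t)\in U$, monotonicity in $k$ gives $\lambda(v_{k-1}-v_k)\ge 0$, and the mean value theorem gives $(1-v_{k+1})^d-(1-v_k)^d\le d\,v_k$, so (\ref{eq.limit}) yields $\frac{d}{dt}v_k(t)\ge -d\,v_k(t)$ and hence $v_k(t)\ge e^{-dt}v_k(0)$. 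Taking $v_k(0)=2^{-\sqrt{k}}$ (a legitimate member of $U_o$), at \emph{every} finite time $t$ the tail of $\bv(t)$ still dominates every geometric sequence for $k$ large, in particular it dominates $v^*_{k-s}=\Theta((\lambda/d)^{k-s})$ for every shift $s$; so ``after a finite time $\bv(\cdot)$ is dominated by a shifted equilibrium'' is false in general. Your fallback route is also unsubstantiated on its two load-bearing steps: you give no Lyapunov/LaSalle structure forcing $\omega$-limit points of an arbitrary trajectory to be equilibria (monotone-systems theory does not supply this for all initial states), and the mass balance $\frac{d}{dt}\sum_{k\ge1}v_k=\lambda-1+(1-v_1)^d$ combined with only $\liminf_t v_1(t)\ge v^*_1$ makes the derivative eventually $\le\varepsilon$ for each $\varepsilon>0$, which is still compatible with slow unbounded growth of total mass unless you additionally quantify the rate at which $v_1(t)\to v^*_1$.

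The paper's device, which your proposal is missing, sidesteps both difficulties: instead of seeking a super-solution dominating $\bv(0)$, run the flow itself from $\bv^+(0)=\max\{\bv(0),\bv^*\}$ coordinatewise, which lies in $U_o$ whenever $\bv(0)$ does. Your own comparison principle, applied against the \emph{constant} trajectory at $\bv^*$ rather than against a time-monotone one, gives $\bv^+(t)\ge\bv^*$ for all $t$, and this one inequality does double duty. First, $\frac{d}{dt}\sum_{i\ge1}v^+_i(t)=\lambda+(1-v^+_1(t))^d-1\le 0$, so total mass is nonincreasing --- exactly the uniform-in-time bound your sketch lacks. Second, the tail-sum identity
\[
\sum_{i=k}^\infty v^+_i(t)-\sum_{i=k}^\infty v^+_i(0)=\int_0^t\lambda\left(v^+_{k-1}(s)-v^*_{k-1}\right)ds+\int_0^t\left((1-v^+_k(s))^d-(1-v^*_k)^d\right)ds
\]
has both integrands of constant sign (the first nonnegative, the second nonpositive) with the sum bounded uniformly in $t$; induction on $k$, starting from $v^+_0\equiv v^*_0=1$ which annihilates the first integral at $k=1$, forces each integral to converge and $v^+_k(t)\to v^*_k$, with no appeal to time-monotonicity, $\omega$-limit sets, or finite-time domination. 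The symmetric construction $\bv^-(0)=\min\{\bv(0),\bv^*\}$ recovers your lower bound inside the same argument. In short, your order-preservation plus fixed-point-uniqueness framework matches the paper's, but the step converting $\bv(0)\in U_o$ into convergence from above requires the paper's integral/sign argument; neither of the routes you propose works as stated.
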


We provide a proof based on the following auxiliary result:

\begin{lemma} Let $\bv^+(\cdot)$ and $\bv^-(\cdot)$
solve the differential system~(\ref{eq.limit}) with respective
initial conditions $\bv^+(0),\bv^-(0)\in U$ such that $v_k^+(0)\ge
v^-_k(0)$ for all $k$. Then $v^+_k(t)\ge v^-_k(t)$ for all $k$ and
all $t>0$. \label{lemma.monotone}
\end{lemma}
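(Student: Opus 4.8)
The differential system~(\ref{eq.limit}) is cooperative (quasimonotone): writing its right-hand side as $f_k(\bv)=\lambda(v_{k-1}-v_k)-(1-v_{k+1})^d+(1-v_k)^d$, the dependence on the two off-diagonal coordinates is nondecreasing, since $\partial f_k/\partial v_{k-1}=\lambda\ge 0$ and $\partial f_k/\partial v_{k+1}=d(1-v_{k+1})^{d-1}\ge 0$ for $v_{k+1}\in[0,1]$. Order preservation for such systems is a Kamke-type comparison principle, and the plan is to prove it by a self-contained Gronwall estimate tailored to the infinite coordinate array.

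I would set $w_k(t)=v_k^+(t)-v_k^-(t)$, so that $w_k(0)\ge 0$ for every $k$ and $w_0\equiv 0$, both trajectories having $v_0\equiv 1$. Subtracting the two instances of~(\ref{eq.limit}) and applying the mean value theorem to $x\mapsto(1-x)^d$---legitimate because both trajectories remain in $U$ (invariance of $U$ under the flow is readily checked by inspecting the vector field on the faces $v_k=v_{k+1}$ and $v_k=0$), so all coordinates stay in $[0,1]$---I would obtain
\[ \dot w_k=\lambda\,w_{k-1}+a_{k+1}(t)\,w_{k+1}-(\lambda+b_k(t))\,w_k, \]
with mean-value slopes $a_{k+1}(t),b_k(t)\in[0,d]$. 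The decisive feature is that the two off-diagonal coefficients $\lambda$ and $a_{k+1}$ are nonnegative; the sign of the diagonal coefficient will be immaterial.

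Next I would pass to the negative parts $p_k(t)=\max\{0,-w_k(t)\}$, which satisfy $p_0\equiv 0$, $p_k(0)=0$, and are Lipschitz with a constant uniform in $k$. Using $\dot p_k=-\dot w_k\,\mathbf{1}\{w_k<0\}$ a.e., the bounds $-w_{k-1}\le p_{k-1}$ and $-w_{k+1}\le p_{k+1}$, and the fact that the diagonal term $(\lambda+b_k)w_k$ is nonpositive on $\{w_k<0\}$ and may thus be discarded, one is led to the a.e. differential inequality
\[ \dot p_k\le \lambda\,p_{k-1}+d\,p_{k+1}. \]

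The main obstacle is that there are infinitely many coordinates, which makes a first-crossing or pointwise-infimum argument delicate. I would circumvent this by collapsing the array to a scalar through a geometric weight: fixing $\gamma\in(0,1)$ and setting $\Phi(t)=\sum_{k\ge 1}\gamma^k p_k(t)$, the series is finite and Lipschitz with $\Phi(0)=0$ and $\Phi\ge 0$. Differentiating term by term---justified by the uniform Lipschitz bound---and reindexing, the contributions telescope, and using $p_0=0$ I expect
\[ \dot\Phi\le\Big(\lambda\gamma+\frac{d}{\gamma}\Big)\Phi\qquad\text{a.e.} \]
Since $\Phi(0)=0$, Gronwall's inequality forces $\Phi\equiv 0$; as every summand is nonnegative this yields $p_k\equiv 0$, i.e.\ $w_k=v_k^+-v_k^-\ge 0$ for all $k$ and all $t>0$, which is the assertion.
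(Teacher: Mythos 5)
Your proof is correct, but it takes a genuinely different route from the paper. The paper argues by contradiction through a soft first-crossing argument: it posits a first instant $t>0$ at which $v^+_k(t)=v^-_k(t)$ while $\frac{d}{dt}v^+_k(t)<\frac{d}{dt}v^-_k(t)$ for some $k$, picks the \emph{largest} such index $i$, and reads off from~(\ref{eq.limit}) that the derivative difference at $i$ equals $\lambda(v^+_{i-1}(t)-v^-_{i-1}(t))+(1-v^-_{i+1}(t))^d-(1-v^+_{i+1}(t))^d\ge 0$, a contradiction. That argument is short, but it leans on the existence of a first crossing time and of a largest violating index, both of which are delicate for an infinite coordinate array --- exactly the obstacle you identify and then avoid. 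Your quantitative alternative (subtract the equations, bound the mean-value slopes using $v_k^\pm\in[0,1]$, pass to negative parts $p_k$, collapse via the geometric weight $\gamma^k$, and close with Gronwall) is rigorous in infinite dimensions and, as a by-product, yields the continuous-dependence estimate $\Phi(t)\le e^{(\lambda\gamma+d/\gamma)t}\Phi(0)$ in the weighted $\ell^1$ norm, which is strictly more than order preservation; the price is length and a few technical steps. Those steps are sound as sketched: the a.e.\ identity $\dot p_k=-\dot w_k\,1\{w_k<0\}$ holds since $w_k$ is $C^1$ (the exceptional zeros of $w_k$ with $\dot w_k\ne 0$ are isolated), term-by-term differentiation of $\Phi$ is justified because the $p_k$ are Lipschitz uniformly in $k$ (each $|\dot w_k|\le 2(\lambda+d)$) so $\Phi$ is absolutely continuous, and for the slope bounds $a_{k+1},b_k\in[0,d]$ you only need coordinatewise invariance of $[0,1]$, which follows from the subtangential condition you check together with Lipschitzness of the vector field (a Nagumo-type invariance argument, also implicitly assumed by the paper).
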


\begin{proof}
Suppose that the lemma is incorrect and let $t>0$ be the first
instant such that
\[v^+_k(t)=v^-_k(t)~~ \mbox{ and }~~
\frac{d}{dt}v_k^+(t)<\frac{d}{dt}v^-_k(t) ~~~\mbox{ for some }
k.\] Let $i$ be the largest index $k$ that satisfies this
condition at time $t$. Then by (\ref{eq.limit})
\[\frac{d}{dt}v_i^+(t)-\frac{d}{dt}v^-_i(t)=
\lambda(v^+_{i-1}(t)-v^-_{i-1}(t))+(1-v^-_{i+1}(t))^d-(1-v^+_{i+1}(t))^d.\]
The right hand side of this equality is nonnegative due to the
choice of $t$ (since otherwise either the condition
$v^+_{i-1}(t)\ge v^-_{i-1}(t)$ and or the condition
$v^+_{i+1}(t)\ge v^-_{i+1}(t)$ must be violated before time $t$).
This contradicts with the definition of $t$; therefore no such $t$
exists and the lemma holds.
\end{proof}

\begin{proofl}{\bf\ref{lemma.limit}}
Let $\bv^+(\cdot)$ and $\bv^-(\cdot)$ be solutions
to~(\ref{eq.limit}) with respective initial states $\bv^+(0)$ and
$\bv^-(0)$ that are defined by setting
$v^+_k(0)=\max\{v_k(0),v^*_k\}$ and
$v^-_k(0)=\min\{v_k(0),v^*_k\}$ for $k=0,1,2,\cdots$. By
Lemma~\ref{lemma.monotone} \beq
v^-_k(t)~\le~v_k(t),v^*_k~\le~v^+_k(t),~~~\mbox{ for all }k,t.
\label{eq.sandwich} \eeq Equality (\ref{eq.limit}) and definition
(\ref{eq.vDef}) of $\bv^*$ give \bqaa
\frac{d}{dt}\sum_{i=k}^\infty v^\pm_i(t)&=&\lambda v^\pm_{k-1}(t)
+(1-v^\pm_k(t))^d-1\\&=& \lambda(v^\pm_{k-1}(t)-v^*_{k-1}) +
(1-v^\pm_k(t))^d-(1-v^*_k)^d, \eqaa or, in integral form,
\beq\sum_{i=k}^\infty v^\pm_i(t)-\sum_{i=k}^\infty v^\pm_i(0)=
\int_0^t\lambda(v^\pm_{k-1}(s)-v^*_{k-1})ds +
\int_0^t((1-v^\pm_k(s))^d-(1-v^*_k)^d)ds. \label{eq.integral} \eeq
Note that since $v^+_1(t)\ge v^*_1=1-\sqrt[d]{1-\lambda}$ it
follows that
\[\frac{d}{dt}\sum_{i=1}^\infty v^+_k(t)~=~\lambda
+(1-v^+_1(t))^d-1~\le~ 0.\] Hence $\sum_{i=k}^\infty v^+_i(t)$,
and therefore $\sum_{i=k}^\infty v^-_i(t)$, is bounded by
$\sum_{k=1}^\infty v_k^+(0)$ uniformly for all $t$. In turn
equality~(\ref{eq.integral}) yields
\[\left|~\int_0^t\lambda(v^\pm_{k-1}(s)-v^*_{k-1})ds +
\int_0^t((1-v^\pm_k(s))^d-(1-v^*_k)^d)ds~\right|~\le~\sum_{i=k}^\infty
v^+_i(0). \] The bound on the right hand side is finite since
$\bv^+(0)\in U_o$ due to the hypothesis $\bv(0)\in U_o$. Note
that, owing to the inequality~(\ref{eq.sandwich}), neither one of
the two integrands above changes sign. Hence if the first integral
converges as $t\rightarrow\infty$ then so does the second one,
implying further that \beq
\lim_{t\rightarrow\infty}v^\pm_k(t)=v^*_k. \label{eq.pmlimit} \eeq
Since $v^\pm_0(t)\equiv v^*_0=1$, this is clearly the case for
$k=1$. Induction on $k$ confirms that equality~(\ref{eq.pmlimit})
holds for all $k$. The desired conclusion (\ref{eq.limit_v_k}) now
follows from the property (\ref{eq.sandwich}).
\end{proofl}

Theorem~\ref{thm.fluid}, which establishes convergence over finite
time intervals, is complemented next by showing that equilibrium
distribution of $\bu(\cdot)$ converges as $n\rightarrow\infty$ to
the deterministic measure concentrated at~$\bv^*$.

\begin{theorem}
The process $\bu(\cdot)$ is ergodic. Let
$\bu^*=\{u^*_k\}_{k=0}^{\infty}$ denote the equilibrium random
variable. For $\varepsilon>0$ \beq
\lim_{n\rightarrow\infty}P_n(\rho(\bu^*,\bv^*)>\varepsilon)=0.
\label{eq.thm.equilibrium}\eeq In particular
$\lim_{n\rightarrow\infty}E_n[u^*_k]=v^*_k$ for $k=0,1,2,\cdots$.
\label{thm.equilibrium}
\end{theorem}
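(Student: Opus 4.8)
The plan is to establish the two assertions separately: ergodicity of $\bu(\cdot)$ for each fixed $n$, and the interchange-of-limits statement~(\ref{eq.thm.equilibrium}). For ergodicity I would apply a Foster--Lyapunov drift criterion to the irreducible countable-state Markov process that tracks the $n$ individual queue lengths. The total occupancy $\sum_i q_i$ is inadequate as a Lyapunov function on its own: in configurations where all packets are concentrated in a few long queues the total departure rate $n\alpha_n(1-(1-u_1)^d)$ is small, so the occupancy drift is positive. The remedy is a geometric Lyapunov function $V=\sum_i\gamma^{q_i}$ with $\gamma>1$ small enough that $\gamma\lambda<\alpha_n$. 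Because LCQ($d$) always serves a \emph{longest} sampled queue, the departure term is weighted toward the highest occupancies, and one verifies a bound of the form $\mathcal{L}V\le-(\gamma-1)cV+(\gamma-1)Cn$ for constants $c,C>0$, which is negative outside a finite set and yields positive recurrence. The delicate case is exactly the concentrated configurations, where one must show that $\sum_{k\ge1}((1-u_{k+1})^d-(1-u_k)^d)\gamma^k$ inherits the magnitude of $V$. I would retain this estimate, since evaluating it in equilibrium also delivers the uniform bound $\sup_n E_n[\sum_{k\ge1}u^*_k]<\infty$ needed below.

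For~(\ref{eq.thm.equilibrium}) I would argue by interchange of limits. Let $\mu_n$ be the law of $\bu^*$ on $U$. Since $U$ is compact, $\{\mu_n\}$ is tight, so it suffices to show that \emph{every} weak subsequential limit $\mu$ of $\{\mu_n\}$ equals the point mass $\delta_{\bv^*}$; weak convergence to a single point mass is equivalent to the claimed convergence in probability, and since $u^*_k\in[0,1]$ bounded convergence then gives $E_n[u^*_k]\to v^*_k$. Fix a subsequence along which $\mu_n\Rightarrow\mu$ and start the $n$-th system in equilibrium, so that $\bu(t)$ has law $\mu_n$ for every $t\ge0$. Running the tightness and martingale-vanishing argument of Theorem~\ref{thm.fluid} with this random initial law, a further subsequence satisfies $\bu(\cdot)\Rightarrow\bv(\cdot)$ with $\bv(\cdot)$ a solution of~(\ref{eq.limit}) and $\bv(0)\sim\mu$. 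Evaluating at a fixed time $t$ and matching the two weak limits of $\bu(t)$ then yields $\bv(t)\sim\mu$ for every $t$.

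It remains to identify $\mu$. The map $\bu\mapsto\sum_{k\ge1}u_k$ is lower semicontinuous on $U$, so by the uniform bound from the first part and the portmanteau inequality, $E_\mu[\sum_{k\ge1}v_k]\le\liminf_n E_n[\sum_{k\ge1}u^*_k]<\infty$, whence $\mu(U_o)=1$. Thus $\bv(0)\in U_o$ almost surely, and Lemma~\ref{lemma.limit} applies pathwise: $v_k(t)\to v^*_k$ for every $k$ as $t\to\infty$, which, since coordinatewise convergence of elements of $U$ coincides with $\rho$-convergence, gives $\rho(\bv(t),\bv^*)\to0$ almost surely. Hence the law of $\bv(t)$ converges weakly to $\delta_{\bv^*}$ as $t\to\infty$; but that law is $\mu$ for every $t$, forcing $\mu=\delta_{\bv^*}$.

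The main obstacle is the interchange itself: rigorously transporting the finite-horizon fluid limit of Theorem~\ref{thm.fluid} into the stationary regime, and in particular securing the uniform occupancy bound that prevents mass from escaping to states of infinite mean occupancy outside $U_o$, where Lemma~\ref{lemma.limit} would not apply. This is why the Lyapunov estimate used for ergodicity must be quantitative enough to control equilibrium occupancy uniformly in $n$, rather than merely certify positive recurrence.
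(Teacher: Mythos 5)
Your overall architecture is sound, and its second half essentially matches the paper's strategy: start the $n$-th system in its equilibrium law, extract a subsequential fluid limit via Theorem~\ref{thm.fluid}, and use the global attraction of Lemma~\ref{lemma.limit} on $U_o$ to pin down the limit. Where you genuinely diverge is the identification step. The paper argues by contradiction: from compactness of $U_{o,\lambda}=\{\bu\in U:\sum_{k\ge1}u_k\le\lambda/(1-\lambda)\}$ and Lemma~\ref{lemma.limit} it extracts a single time $t(\varepsilon)$ by which \emph{every} solution of (\ref{eq.limit}) started in $U_{o,\lambda}$ is within $\varepsilon/2$ of $\bv^*$, and then plays stationarity of $\bu(t(\varepsilon))$ against an assumed subsequential failure of (\ref{eq.thm.equilibrium}). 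Your route --- the subsequential limit $\mu$ is invariant under the flow of (\ref{eq.limit}), and pathwise attraction to $\bv^*$ on $U_o$ forces $\mu=\delta_{\bv^*}$ --- is equally valid and in one respect cleaner: it needs only $\mu(U_o)=1$, which you correctly obtain from lower semicontinuity of $\bu\mapsto\sum_k u_k$ and the portmanteau inequality, rather than the uniform-in-initial-condition convergence time and the almost-sure confinement to a fixed compact set that the paper invokes.

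The genuine weak spot is your first half. The paper needs no Foster--Lyapunov machinery at all: for each $n$ the process is irreducible on the compact set $U_n=\{\bu\in U:nu_k\in\mathbb{Z}_+\}$, which already gives ergodicity and a unique equilibrium law. More importantly, the uniform occupancy bound that you defer to the drift estimate follows from a one-line stationarity (level-crossing) identity: in equilibrium the rate of arrivals to queues of occupancy $\ge k$ equals the rate of departures from them, giving $\lambda E_n[u^*_{k-1}]=1-E_n[(1-u^*_k)^d]\ge E_n[u^*_k]$ since $(1-u)^d\le 1-u$ (up to the factor $\alpha_n\rightarrow1$, which merely replaces $\lambda$ by $\lambda/\alpha_n$); hence $E_n[u^*_k]\le\lambda^k$ and $E_n[\sum_{k\ge1}u^*_k]\le\lambda/(1-\lambda)$ uniformly in $n$. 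Your geometric drift bound, by contrast, is asserted rather than verified, and its constants look wrong as stated: in the concentrated configuration you single out, LCQ($d$) serves the lone long queue only when it is both connected and among the $d$ sampled connected queues, a rate of order $d$ rather than $\alpha_n$, so the smallness condition on $\gamma$ should involve $d$, and establishing the inequality $\mathcal{L}V\le-(\gamma-1)cV+(\gamma-1)Cn$ across all configurations --- which you yourself flag as the delicate case --- is the unproven hard core of your plan. Since the balance identity supplies everything you wanted from the Lyapunov function, you should replace that machinery wholesale; with that substitution your interchange argument goes through. (A shared caveat: for small fixed $n$ with $\alpha_n=1-(1-q)^n<\lambda$ the queue lengths are not positive recurrent, so ergodicity should be read for $n$ large enough --- your own condition $\gamma\lambda<\alpha_n$ implicitly concedes this.)
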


\begin{proof}
Let $U_n=\{\bu\in U: nu_k\in \mathbb{Z}_+ \mbox{ for } k\ge 0\}$.
Note that $(\bu(t):t\ge 0)$ is irreducible in $U_n$ and $U_n$ is
compact; therefore the process is ergodic and has a unique
equilibrium distribution concentrated on $U_n$. In that
equilibrium the rate of arrivals to queues with occupancy $k$ or
higher should be equal to the rate of departures from such queues.
That is,
\[E_n[\lambda u^*_{k-1}]~=~1-E_n[(1-u^*_k)^d]~\ge~E_n[u^*_k],~~~\mbox{ for }k\ge 1,\]
where the inequality follows since $(1-u^*_k)^d\le (1-u^*_k)$.
Therefore $E_n[u^*_k]\le\lambda^k$ and in turn
$E_n[\sum_{k=1}^\infty u^*_k]\le \lambda/(1-\lambda)$. Let
$U_{o,\lambda}\triangleq\{\bu\in U:\sum_{k=1}^\infty u_k\le
\lambda/(1-\lambda)\}$ so that $P_n(\bu^*\in U_{o,\lambda})=1$.

Suppose that (\ref{eq.thm.equilibrium}) is false so that for some
infinite subsequence $\{n'\}$ of $\{n\}$ and some $\delta>0$ \beq
P_{n'}(\rho(\bu^*,\bv^*)>\varepsilon)>\delta. \label{eq.contra}
\eeq Due to Lemma~\ref{lemma.limit} and compactness of
$U_{o,\lambda}$ one can choose $t(\varepsilon)$ such that if
$\bv(0)\in U_{o,\lambda}$ then $\rho(\bv(t),\bv^*)<\varepsilon/2$
for $t\ge t(\varepsilon)$. Let $\bu(0)$ have the same distribution
as $\bu^*$ and let $\bv(0)=\bu(0)$. By Theorem~\ref{thm.fluid}
there is a further subsequence $\{n''\}$ of $\{n'\}$ such that
$P_{n''}(\rho(\bu(t(\varepsilon)),\bv(t(\varepsilon)))>\varepsilon/2)<\delta$
whenever $n''$ is large enough. Since $P_{n''}(\bu^*\in
U_{o,\lambda})=1$, the choice of $t(\varepsilon)$ implies that
$P_{n''}(\rho(\bu(t(\varepsilon)),\bv^*)>\varepsilon)<\delta$ for
those values of $n''$. However $\bu(t(\varepsilon))$ and $\bu^*$
have identical distributions as the latter is in equilibrium;
leading to a contradiction with (\ref{eq.contra}). Hence no
sequence $\{n'\}$ and constant $\delta>0$ satisfy
(\ref{eq.contra}); so (\ref{eq.thm.equilibrium}) holds. By
definition of $\rho$ equality (\ref{eq.thm.equilibrium}) implies
that each entry $u^*_k$ of $\bu^*$ converges in probability to the
constant $v^*_k$; since $0\le u^*_k\le 1$, so does $E_n[u_k^*]$.
\end{proof}

We conclude the discussion of LCQ($d$) with a relationship between
$d$ and the tail probabilities of equilibrium queue occupancy:

\begin{theorem} $v^*_k=\Theta((\lambda/d)^k)$ as
$k\rightarrow\infty$. \label{thm.tail}
\end{theorem}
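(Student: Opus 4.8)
The plan is to analyze the scalar recursion $v^*_k=f(v^*_{k-1})$, where $f(x)=1-\sqrt[d]{1-\lambda x}$, and to show that the ratios $v^*_k/v^*_{k-1}$ approach $\lambda/d$ fast enough that the telescoping product $\prod_{j\le k}\bigl(v^*_j/v^*_{j-1}\bigr)$ stays comparable to $(\lambda/d)^k$. The structural fact I would exploit is that $f$ is increasing and convex on $[0,1]$: since $d\ge 1$ gives $1/d\le 1$, a direct computation yields $f''(x)=\tfrac{\lambda^2}{d}\bigl(1-\tfrac1d\bigr)(1-\lambda x)^{1/d-2}\ge 0$, together with $f(0)=0$ and $f'(0)=\lambda/d$. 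Recall also from the discussion preceding~(\ref{eq.vDef}) that $v^*_k\le\lambda^k$, so the iterates tend to $0$ and the recursion operates in a shrinking neighborhood of the origin where the linearization $f(x)\approx(\lambda/d)x$ governs the behavior.

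For the lower bound I would use convexity directly: a convex function vanishing at the origin has nondecreasing secant slope, so $f(x)/x\ge \lim_{y\downarrow 0}f(y)/y=f'(0)=\lambda/d$ for every $x\in(0,1]$. Applying this at $x=v^*_{k-1}$ gives $v^*_k\ge(\lambda/d)v^*_{k-1}$, and iterating from $v^*_0=1$ yields $v^*_k\ge(\lambda/d)^k$. This already furnishes one half of the claimed $\Theta$, with the explicit constant $1$.

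For the matching upper bound I would quantify the convexity correction. Since $\lambda<1$, on the compact interval $[0,1]$ one has $1-\lambda x\ge 1-\lambda>0$, so $f$ is smooth there and $f(x)/x$ extends smoothly to $\lambda/d$ at $x=0$; hence there is a constant $C=C(\lambda,d)$ with $f(x)\le(\lambda/d)\,x\,(1+Cx)$ for all $x\in[0,1]$. Writing the telescoping product
\[
\frac{v^*_k}{(\lambda/d)^k}=\prod_{j=1}^{k}\frac{f(v^*_{j-1})}{(\lambda/d)\,v^*_{j-1}}\le\prod_{j=1}^{k}\bigl(1+Cv^*_{j-1}\bigr)\le\exp\Bigl(C\sum_{j\ge 0}v^*_{j}\Bigr),
\]
and bounding $\sum_{j\ge 0}v^*_j\le\sum_{j\ge 0}\lambda^j=1/(1-\lambda)$ via the earlier estimate $v^*_j\le\lambda^j$, I obtain $v^*_k\le c_2(\lambda/d)^k$ with $c_2=\exp\bigl(C/(1-\lambda)\bigr)$. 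Combining the two bounds gives $(\lambda/d)^k\le v^*_k\le c_2(\lambda/d)^k$, i.e.\ $v^*_k=\Theta((\lambda/d)^k)$.

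The main obstacle is the upper bound: the lower bound drops out of convexity, but for the upper bound one must show that the accumulated multiplicative deviations of the successive ratios from $\lambda/d$ remain bounded as $k\to\infty$. This is exactly where the summability $\sum_j v^*_j<\infty$—itself a consequence of the geometric bound $v^*_j\le\lambda^j$ established earlier together with $\lambda<1$—is essential, since it converts the quadratic correction $Cx$ in the estimate for $f$ into a convergent infinite product. I would take some care to make the constant $C$ genuinely uniform over $x\in[0,1]$ rather than merely asymptotic as $x\to 0$, as the first few iterates (notably $v^*_0=1$) lie away from the origin.
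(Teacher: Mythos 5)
Your argument is correct, and while it shares the paper's overall skeleton (linearize the recursion at $0$ where $f'(0)=\lambda/d$, bound the second-order corrections using the geometric decay $v^*_j\le\lambda^j$), the mechanics are genuinely different. For the lower bound the paper Taylor-expands $\sqrt[d]{1-x}$ to rewrite the recursion as (\ref{eq.tail1}) and drops the nonnegative higher-order series terms; you instead invoke convexity of $f(x)=1-\sqrt[d]{1-\lambda x}$ and the nondecreasing-secant-slope property. These encode the same positivity, but your version covers $d=1$ without the paper's special-casing, since $f$ is then simply linear. The real divergence is in the upper bound. The paper tracks the exact second-order series coefficient, introduces $c_k=v^*_k/(\lambda/d)^k$ and the remainder factor $\beta_k$ of (\ref{eq.thm.tail.aux2})--(\ref{eq.thm.tail.aux3}), waits for a threshold $k_o$ with $v^*_{k_o-1}<1/d$ so that $\beta_k<d/(d-1)$, and then runs an induction to get (\ref{eq.thm.tail.aux3.5}), verifying a side condition at each step. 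You instead take one uniform quadratic bound $f(x)\le(\lambda/d)x(1+Cx)$ valid on all of $[0,1]$ (legitimate because $1-\lambda x\ge 1-\lambda>0$ keeps $f$, hence $f(x)/x$, smooth up to $x=1$, a point you rightly flag since $v^*_0=1$ is far from the origin), telescope the ratios, and apply $1+y\le e^y$ with $\sum_j v^*_j\le 1/(1-\lambda)$ to bound the infinite product by $\exp(C/(1-\lambda))$. Your route is shorter, eliminates the threshold $k_o$ and the inductive bookkeeping entirely, and makes transparent that summability of $\{v^*_j\}$ is the driving mechanism; the price is a cruder constant, since $C$ comes from a global bound on $f''$ over $[0,1]$ rather than the sharp coefficients the paper tracks, but for a $\Theta$-statement that costs nothing. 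One point to make explicit in a final write-up: the division by $v^*_{j-1}$ in the telescoping product requires $v^*_j>0$ for every $j$, which your own lower bound $v^*_k\ge(\lambda/d)^k$ supplies provided you establish it first.
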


\begin{proof}
The assertion is immediate for $d=1$ so we consider the case
$d>1$. Equality~(\ref{eq.vDef}), together with Taylor expansion of
$\sqrt[d]{1-x}$ around $x=0$ yields \beq
v^*_k=\frac{\lambda}{d}v^*_{k-1}+\frac{1}{d}\sum_{i=2}^\infty\frac{(\lambda
v^*_{k-1})^i}{i!}\prod_{j=1}^{i-1}(j-\frac{1}{d}),~~~~~~~~~k=1,2,\cdots.
\label{eq.tail1} \eeq The second term on the right hand side is
nonnegative; therefore $v^*_k\ge (\lambda/d)v^*_k$ and \beq
\liminf\limits_{k\rightarrow\infty} \frac{v^*_k}{(\lambda/d)^k}\ge
1. \label{eq.thm.tail.aux1} \eeq We define $c_k\triangleq
v^*_k/(\lambda/d)^k$ and complete the proof by showing that
$\{c_k\}_{k=0}^{\infty}$ is uniformly bounded. Let $\beta_k$ be
defined as \beq \beta_k~\triangleq~1+\sum_{i=1}^\infty(\lambda
v^*_{k-1})^{i} \frac{2}{(i+2)!}\prod_{j=2}^{i+1}(j-\frac{1}{d})
~<~1+\sum_{i=1}^\infty(\lambda v^*_{k-1})^{i}
\label{eq.thm.tail.aux2} \eeq so that equality~(\ref{eq.tail1})
can be rearranged as \beq
v^*_k=\frac{\lambda}{d}v^*_{k-1}+\frac{(\lambda
v^*_{k-1})^2}{2d}(1-\frac{1}{d})\beta_k. \label{eq.thm.tail.aux3}
\eeq Since $v^*_k\le\lambda^k$ there exists a finite $k_o$ such
that $v_{k-1}^*<1/d$ for $k\ge k_o$. The bound
in~(\ref{eq.thm.tail.aux2}) implies that
$\beta_k<1/(1-\lambda/d)<d/(d-1)$ for such $k$; in turn by
(\ref{eq.thm.tail.aux3})
\[c_k<c_{k-1}+c_{k-1}^2\frac{d}{2}(\lambda/d)^k,~~~~~~~~~k>k_o.\] It
can be verified by induction on $k>k_o$ that \beq
c_k<c_{k_o}(1+\lambda+\lambda^2+\cdots+\lambda^{k-k_o}):
\label{eq.thm.tail.aux3.5} \eeq Namely, if
(\ref{eq.thm.tail.aux3.5}) holds for $k$ then it holds also for
$k+1$ if
$c_{k_o}(1+\lambda+\lambda^2+\cdots+\lambda^{k-k_o})^2\lambda^{k_o}/(2d^{k})<1$.
This latter condition can be verified based on the bound
$c_{k_o}(\lambda/d)^{k_o}=v_{k_o}<1/d$, which follows from the
definition of $k_o$. Inequality~(\ref{eq.thm.tail.aux3.5}) implies
the uniform bound $c_k<c_{k_o}/(1-\lambda)$; therefore \beq
\limsup\limits_{k\rightarrow\infty} \frac{v^*_k}{(\lambda/d)^k} <
\frac{c_{k_o}}{1-\lambda}. \label{eq.thm.tail.aux4} \eeq The
theorem follows due to (\ref{eq.thm.tail.aux1}) and
(\ref{eq.thm.tail.aux4}).
\end{proof}

\section{LCQ}\label{sec.LCQ}

Given $\bmm(\tau)$ at a scheduling instant $\tau$, the maximum
occupancy over all connected queues at time $\tau$ is equal to
$k=1,2,\cdots$ with probability
$(1-q)^{m_{k+1}(\tau)}-(1-q)^{m_k(\tau)}$; hence under the LCQ
policy $\bu(\cdot)$ is a time-homogenous Markov process with jump
rates \beq \bu\leftarrow\left\{\begin{array}{ll}
\bu+n^{-1}\be_k&\mbox{ at rate }~~n\lambda(u_{k-1}-u_k)\\
\bu-n^{-1}\be_k&\mbox{ at rate
}~~n\left((1-q)^{nu_{k+1}}-(1-q)^{nu_k}\right).
\end{array}\right.
\label{eq.lcq.generator} \eeq This process can be constructed as
in Section~\ref{sec.LCQ-d}, so that \beq
u_k(t)=u_k(0)+\int_0^t\left(\lambda(u_{k-1}(s)-u_k(s))-(1-q)^{m_{k+1}(s)}+(1-q)^{m_{k}(s)}\right)ds+\varepsilon_k(t)
\label{eq.u-lcq}\eeq where $\varepsilon_k(\cdot)$ is a martingale
that vanishes as $n\rightarrow\infty$. The sequence of processes
$\bu(\cdot): n=1,2,\cdots$ converges in distribution along
subsequences of $\{n\}$, but identifying a limit is relatively
more involved than for the LCQ$(d)$ policy since the process
$\bmm(\cdot)$ fluctuates persistently for all values of $n$ and
the integrand in (\ref{eq.u-lcq}) does not converge. Rather than
this integrand, here we study the behavior of the integral
in~(\ref{eq.u-lcq}) via an averaging technique due to
Kurtz~\cite{Kurtz92}. In reading this section the reader may find
it helpful to consult related applications of this technique
in~\cite{MA03,HuntKurtz94,Zachary02}.

Let $\Omega$ denote the set of sequences
$\bm=\{\omega_k\}_{k=0}^\infty$ such that $\omega_k\in
\mathbb{Z}_+\cup\{\infty\}$, $\omega_0=\infty$, and $\omega_k\ge
\omega_{k+1}$. Define the mapping $h:\Omega\mapsto [0,1]^\infty$
by setting
\[h(\bm)=\{(1+\omega_k)^{-1}\}_{k=0}^{\infty},~~~~~\bm\in\Omega,\] with the
understanding that $1+\infty=\infty$ and $1/\infty=0$. Let
$\Omega$ be endowed with metric $\rho_o$ defined by
\[\rho_o(\bm,\bm')=\rho(h(\bm),h(\bm')),~~~~~\bm,\bm'\in\Omega.\]
In particular $\Omega$ is compact with respect to the induced
topology. We denote by $\mathcal{L}$ the collection of measures
$\mu$ on the product space $[0,\infty)\times \Omega$ such that
$\mu([0,t)\times \Omega)=t$ for each $t>0$. Let $\mathcal{L}$ be
endowed with the topology corresponding to weak convergence of
measures restricted to $[0,t)\times \Omega$ for each~$t$. Since
$\Omega$ is compact, so is $\mathcal{L}$ due to Prohorov's
Theorem.

Let $\xi$ be a random member of $\mathcal{L}$ defined by
\[\xi([0,t)\times A)=\int_0^t1\{\bmm(s)\in A\}ds,~~~t>0,~A\in\mathcal{B}(\Omega).\]
Here $\mathcal{B}(\Omega)$ denotes Borel sets of $\Omega$. Note
that equality~(\ref{eq.u-lcq}) can be expressed in terms of $\xi$
as
\[u_k(t)~=~u_k(0)+\int_0^t\lambda(u_{k-1}(s)-u_k(s))ds-\phi_{k+1}(t)+\phi_{k}(t)+
\varepsilon_k(t) \] where
\[\phi_k(t)~\triangleq~\int_0^t(1-q)^{m_k(s)}ds~=~\int_{[0,t)\times
\Omega}(1-q)^{\omega_k} \xi(ds\times d\bm).\] Compactness of
$\mathcal{L}$ implies that each subsequence of $\{n\}$ has a
further subsequence along which $\xi$ converges in distribution.
This property is also possessed by $\bu(\cdot)$, and therefore by
the pair $(\bu(\cdot),\xi)$.

The following definition is useful in characterizing possible
limits of $(\bu(\cdot),\xi)$: For fixed $\bu\in U$ let
$\omega^\bu(\cdot)$ denote the Markov process with states in
$\Omega$ and with the following transition rates: \beq
\omega^\bu\leftarrow\left\{\begin{array}{ll}
\omega^\bu+\be_k&\mbox{ at rate } \lambda(u_{k-1}-u_k)\\
\omega^\bu-\be_k&\mbox{ at rate }
(1-q)^{\omega^\bu_{k+1}}-(1-q)^{\omega^\bu_k}.
\end{array}\right. \label{eq.w.generator} \eeq
See Figure~\ref{fig.Fast.lcq} for a partial illustration of this
process. The process $\omega^\bu(\cdot)$ bears a certain
resemblance to $\bmm(\cdot)=n\bu(\cdot)$, which can be observed by
inspecting the generators (\ref{eq.lcq.generator}) and
(\ref{eq.w.generator}), though it should be noted that in
(\ref{eq.w.generator}) $\bu=\{u_k\}_{k=0}^\infty$ is a constant
and has no binding to instantaneous values of $\omega^\bu(\cdot)$.
We also point out that $\omega^\bu(\cdot)$ evolves on a
compactified state space and it is reducible due to the states
that involve $\infty$; hence it has multiple equilibrium
distributions in general.
\begin{figure}
\vspace*{-1.2cm}\begin{center}\hspace*{-0.5cm}
\epsfig{file=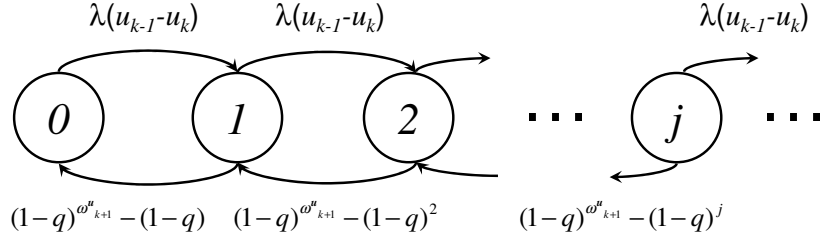, width=5in}\vspace*{-5cm}\end{center}
\caption{Transition rates of $\omega^\bu_k(\cdot)$, that is, the
$k$th coordinate of $\omega^{\bu}(\cdot)$. The process has also an
isolated state $\infty$ which is not shown. The coordinate process
$\omega^\bu_k(\cdot)$ is generally not Markovian due to its
dependence on $\omega^\bu_{k+1}(\cdot)$.} \label{fig.Fast.lcq}
\end{figure}

\begin{theorem}\label{thm.lcq.0}
Let $(\bv(\cdot),\chi)$ be the limit of $(\bu(\cdot),\xi)$ along a
convergent subsequence of $\{n\}$.

a) The limit measure $\chi$ satisfies
\[\chi([0,t)\times A) =
\int_0^t\pi_{\bv(s)}(A)ds,~~~~~~~~~t>0,~A\in\mathcal{B}(\Omega),
\] where, for each $s>0$, $\pi_{\bv(s)}$ is an equilibrium
distribution for the process $\omega^{\bv(s)}(\cdot)$ such that
\[\pi_{\bv(s)}\left(\omega_k=\infty\right)=1 ~~\mbox{ if }~~v_k(s)>0.\]

b) The limit trajectory $\bv(\cdot)$ satisfies \beq \hspace*{-0cm}
\frac{d}{dt}v_k(t)=\lambda(v_{k-1}(t)-v_k(t)) -
E_{\pi_{\bv(t)}}[(1-q)^{\omega_{k+1}}-(1-q)^{\omega_k}],
\label{eq.lcq.1} \eeq where $k=1,2,\cdots$ and $E_{\pi_{\bv(t)}}$
denotes expectation with respect to distribution $\pi_{\bv(t)}$.
\end{theorem}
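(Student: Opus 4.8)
The plan is to apply the averaging method of Kurtz~\cite{Kurtz92} to the slow variable $\bu(\cdot)$ together with the fast variable $\bmm(\cdot)=n\bu(\cdot)$. The governing observation is a time-scale separation: writing $\mathcal{G}_n$ for the generator of $\bmm(\cdot)$ read off from (\ref{eq.lcq.generator}), and $\mathcal{A}_\bu$ for the generator of the frozen process $\omega^\bu(\cdot)$ from (\ref{eq.w.generator}), the jump rates of $\bmm(\cdot)$ are exactly $n$ times those of $\omega^{\bu}(\cdot)$ with $\bu=n^{-1}\bmm$ evaluated at $\omega=\bmm$, so that for every bounded $f$ depending on finitely many coordinates
\[(\mathcal{G}_n f)(\bmm)=n\,(\mathcal{A}_{\bu}f)(\bmm),\qquad \bu=n^{-1}\bmm.\]
Hence the fast process runs at rate $n$ relative to the unit-rate frozen dynamics, and I expect it to equilibrate instantaneously on the slow time scale while $\bu(\cdot)$ feels only the drift averaged against the frozen equilibrium. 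Throughout I would work, as in the proof of Theorem~\ref{thm.fluid}, on a probability space where along the chosen subsequence $\bu(\cdot)\to\bv(\cdot)$ uoc and $\xi\to\chi$ a.s.

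For part (a) I would fix such an $f$ and form the martingale $M_f(t)=f(\bmm(t))-\int_0^t(\mathcal{G}_nf)(\bmm(s))ds$. Dividing by $n$ and using the identity above gives
\[\frac1n\big(f(\bmm(t))-M_f(t)\big)=\int_0^t(\mathcal{A}_{\bu(s)}f)(\bmm(s))ds=\int_{[0,t)\times\Omega}(\mathcal{A}_{\bu(s)}f)(\bm)\,\xi(ds\times d\bm).\]
Since $f$ is bounded, $f(\bmm(t))/n\to0$; since $M_f(\cdot)$ has $O(1)$ jumps at total rate $O(n)$, its quadratic variation is $O(n)$, so Doob's $L^2$ inequality forces $M_f(t)/n\to0$ uoc. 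On the right, uoc convergence $\bu(\cdot)\to\bv(\cdot)$ lets me replace $\mathcal{A}_{\bu(s)}$ by $\mathcal{A}_{\bv(s)}$ with vanishing error (only finitely many arrival rates $\lambda(u_{k-1}-u_k)$ enter), and $(s,\bm)\mapsto(\mathcal{A}_{\bv(s)}f)(\bm)$ is bounded and continuous on $[0,t)\times\Omega$ because each $(1-q)^{\omega_k}$ and each shift $f(\bm\pm\be_k)$ is continuous on the compactification $\Omega$. Weak convergence $\xi\to\chi$ then yields $\int_0^t E_{\mu_s}[\mathcal{A}_{\bv(s)}f]\,ds=0$, where $\mu_s$ is the disintegration of $\chi$ guaranteed by $\chi([0,t)\times\Omega)=t$. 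As this holds for all $t$, $E_{\mu_s}[\mathcal{A}_{\bv(s)}f]=0$ for a.e.\ $s$; ranging $f$ over a countable core of $\mathcal{A}_{\bv(s)}$ and combining null sets shows that $\mu_s=:\pi_{\bv(s)}$ is an equilibrium distribution of $\omega^{\bv(s)}(\cdot)$ for a.e.\ $s$. The supplementary claim follows by testing against the bounded continuous coordinate $\bm\mapsto(1+\omega_k)^{-1}$: on any subinterval $[a,b]$ where $v_k$ stays bounded below by some $\delta>0$ one has $\int_a^b(1+m_k(s))^{-1}ds=\int_a^b(1+nu_k(s))^{-1}ds\to0$, whence $\int_a^b\int_\Omega(1+\omega_k)^{-1}\pi_{\bv(s)}(d\bm)\,ds=0$ and, the integrand being nonnegative, $\pi_{\bv(s)}(\omega_k=\infty)=1$ for a.e.\ $s$ with $v_k(s)>0$.

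Part (b) is then a direct passage to the limit in the integral form (\ref{eq.u-lcq}). The martingale term $\varepsilon_k(\cdot)\to0$ and the smooth terms converge by uoc convergence exactly as in Theorem~\ref{thm.fluid}, while for the averaged terms
\[\phi_k(t)=\int_{[0,t)\times\Omega}(1-q)^{\omega_k}\,\xi(ds\times d\bm)\to\int_0^t E_{\pi_{\bv(s)}}[(1-q)^{\omega_k}]\,ds\]
by weak convergence against the bounded continuous function $(1-q)^{\omega_k}$ combined with part (a). Differentiating the resulting integral identity for $v_k(\cdot)$ produces (\ref{eq.lcq.1}).

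The crux, and the step I expect to be most delicate, is the identification in part (a) of the limit slices $\pi_{\bv(s)}$ as genuine equilibria of the frozen chain. This rests on the exact $n$-scaling between $\mathcal{G}_n$ and $\mathcal{A}_\bu$, on a uniform quadratic-variation bound to annihilate $M_f/n$, on the joint continuity needed to push the generator through the weak convergence of $\xi$, and on selecting a countable measure-determining family of $f$ so that the a.e.-in-$s$ stationarity holds simultaneously for all test functions. The reducibility of $\omega^\bu(\cdot)$, namely its multiple equilibria owing to the absorbing coordinates $\omega_k=\infty$, is precisely what makes the separate $\omega_k=\infty$ statement necessary, and this is handled by the auxiliary test function $(1+\omega_k)^{-1}$.
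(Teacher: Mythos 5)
Your proposal is correct and follows essentially the same route as the paper's proof: the generator identity $(\mathcal{G}_n f)=n(\mathcal{A}_{\bu}f)$ with the martingale $M_f/n$ annihilated via Doob's $L^2$ inequality, replacement of $\bu(s)$ by $\bv(s)$ using uoc convergence, passage to the limit measure $\chi$ by weak convergence, disintegration of $\chi$ (the paper cites Kurtz's Lemma~1.4), and identification of the slices as stationary distributions of the frozen chain through a measure-determining family of test functions (the paper invokes \cite[Proposition 4.9.2]{EthierKurtz}). The only cosmetic difference is in the supplementary claim, where you test against the continuous coordinate $(1+\omega_k)^{-1}$ while the paper integrates indicators $1\{m_k(s)\le B\}$ and lets $B\rightarrow\infty$; both arguments are equivalent, and yours sidesteps the portmanteau step for non-open sets.
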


\begin{proof}
Let all processes be constructed on a common probability space so
that convergence of $(\bu(\cdot),\xi)$ is almost sure. Convergence
of $\bu(\cdot)$ is then uoc.  We start by consulting~\cite[Lemma
1.4]{Kurtz92} to verify that the limit measure $\chi$ possesses a
density so that \beq \chi([0,t)\times A) =
\int_0^t\gamma_s(A)ds,~~~~~~~~~t>0,~A\in\mathcal{B}(\Omega),
\label{eq.thm.lcq.0.aux0} \eeq where, for each $s$, $\gamma_s$ is
a probability distribution on $\Omega$. We proceed by identifying
these distributions.

Let $\mathcal{F}$ denote the collection of bounded continuous
functions $f:\Omega\mapsto \mathbb{R}$ such that $f(\bm)$ depends
on a finite number of entries in the sequence
$\bm=\{\omega_k\}_{k=0}^\infty\in\Omega$. Given $f\in \mathcal{F}$
define the function $G^f:\Omega\times U\mapsto \mathbb{R}$ by
setting
\[G^f(\bm,\bu)\triangleq\sum_{k=1}^{\infty}(f(\bm+\be_k)-f(\bm))
\lambda(u_{k-1}-u_k)+(f(\bm-\be_k)-f(\bm))((1-q)^{\omega_{k+1}}-(1-q)^{\omega_k})\]
for each $\bm\in\Omega$ and $\bu=\{u_k\}_{k=0}^\infty\in U$. $G^f$
is continuous due to the continuity of $f$, and continuity of
$G^f$ is uniform since the product space $\Omega\times U$ is
compact.

The process $f(\bmm(\cdot))$ satisfies at each instant $t$
\begin{eqnarray*}&&\!\!\!\!\!\!\!\!\!\!\!f(\bmm(t))-f(\bmm(0))\\
&&~~~~~~~~~=~~\int_0^t\sum_{k=1}^{\infty}
(f(\bmm(s)+\be_k)-f(\bmm(s)))
dA_{k-1}\left(\int_0^sn\lambda(u_{k-1}(\tau)-u_k(\tau))d\tau\right) \\
&&~~~~~~~~~~~~~+
\int_0^t\sum_{k=1}^{\infty}(f(\bmm(s)-\be_k)-f(\bmm(s)))
dD_k\left(\int_0^sn\left((1-q)^{m_{k+1}(\tau)}-(1-q)^{m_k(\tau)}\right)d\tau\right) \\
&&~~~~~~~~~=~~n\int_0^t\sum_{k=1}^{\infty}(f(\bmm(s)+\be_k)-f(\bmm(s)))
\lambda(u_{k-1}(s)-u_k(s))ds\\
&&~~~~~~~~~~~~~+n\int_0^t\sum_{k=1}^{\infty}(f(\bmm(s)-\be_k)-f(\bmm(s)))
\left((1-q)^{m_{k+1}(s)}-(1-q)^{m_k(s)}\right)ds+\mu^f(t)\\
&&~~~~~~~~~=~~n\int_0^t G^f(\bmm(s),\bu(s))ds+\mu^f(t),
\end{eqnarray*}
where $\mu^f(\cdot)$ is a square-integrable martingale.
Rearranging the last equality and expressing the integral there in
terms of the random measure $\xi$ yields
\[\int_{[0,t)\times\Omega}G^f(\bm,\bu(s))\xi(ds\times
d\bm) ~~=~~\int_0^t G^f(\bmm(s),\bu(s))ds~~=~~
(f(\bmm(t))-f(\bmm(0))/n+\mu^f(t)/n.\] Since $f$ is bounded, the
first term on the right hand side vanishes as
$n\rightarrow\infty$. The martingale $\mu^f(\cdot)$ has bounded
jumps; in turn by Doob's $L^2$ inequality $\mu^f(t)/n$ also
vanishes. Therefore \beq
\int_{[0,t)\times\Omega}G^f(\bm,\bu(s))\xi(ds\times
d\bm)\rightarrow 0.\label{eq.thm.lcq.0.aux1} \eeq Since
$\bu(\cdot)$ converges uoc to $\bv(\cdot)$ by hypothesis, uniform
continuity of $G^f$ implies \beq \left|
\int_{[0,t)\times\Omega}G^f(\bm,\bu(s))\xi(ds\times d\bm) -
\int_{[0,t)\times \Omega}G^f(\bm,\bv(s))\xi(ds\times
d\bm)\right|~\rightarrow~0. \label{eq.thm.lcq.0.aux2} \eeq Finally
by the Continuous Mapping Theorem \beq \int_{[0,t)\times
\Omega}G^f(\bm,\bv(s))\xi(ds\times
d\bm)\rightarrow\int_{[0,t)\times
\Omega}G^f(\bm,\bv(s))\chi(ds\times d\bm).
\label{eq.thm.lcq.0.aux3} \eeq
Observations~(\ref{eq.thm.lcq.0.aux1})--(\ref{eq.thm.lcq.0.aux3})
lead to \[ \int_{[0,t)\times \Omega}G^f(\bm,\bv(s))\chi(ds\times
d\bm)~=~\int_0^t\sum_{\bm\in\Omega}
G^f(\bm,\bv(s))\gamma_s(\bm)ds~=~0, \] where the left equality is
due to~(\ref{eq.thm.lcq.0.aux0}). This equality holds for all
$t>0$; therefore
\[\sum_{\bm\in\Omega} G^f(\bm,\bv(s))\gamma_s(\bm)=0\]
for almost all $s>0$. Since $f\in \mathcal{F}$ is arbitrary (note
that $\mathcal{F}$ is dense in continuous bounded functions on
$\Omega$) \cite[Proposition 4.9.2]{EthierKurtz} implies that
$\gamma_s$ is an equilibrium distribution for the process
$\bm^{\bv(s)}(\cdot)$.

Let $\varepsilon>0$ and $[t_0,t_1]$ be an interval such that
$v_k(t)\ge\varepsilon$ for $t\in[t_0,t_1]$. Since $v_k(t)$ is the
limit of $u_k(t)=n^{-1}m_k(t)$, for any given integer $B$
\[\xi([t_0,t_1]\times\{0,1,2,\cdots,B\})~=~\int_{t_0}^{t_1}1\{m_k(s)\le B\}ds~\rightarrow~0~~~~~\mbox{ as } n\rightarrow\infty.\]
Hence $\chi([t_0,t_1]\times \mathbb{Z}_+)=0$ due to the
arbitrariness of $B$. Since $\varepsilon$ can be chosen
arbitrarily small it follows that $\gamma_t(\mathbb{Z}_+)=0$ for
almost all $t$ such that $v_k(t)>0$. This completes the proof of
part a). Part b) follows from equality~(\ref{eq.u-lcq}) since
\[\int_0^t(u_{k-1}(s)-u_k(s))ds\rightarrow \int_0^t(v_{k-1}(s)-v_k(s))ds\]
due to uoc convergence of $\bu(\cdot)$, and
\[\phi_k(t)~\rightarrow~\int_{[0,t)\times
\Omega}(1-q)^{\omega_k} \chi(ds\times
d\bm)~=~\int_0^tE_{\gamma_s}[(1-q)^{w_k}]ds\] due to the
Continuous Mapping Theorem.
\end{proof}

Theorem~\ref{thm.lcq.0} explains the extent of the disparity
between time scales of two processes, namely $\bmm(\cdot)$ and its
normalized version $\bu(\cdot)$: The process $\bmm(\cdot)$
displays far larger variation than its normalized version, so
that, in the limit of large~$n$, $\bmm(\cdot)$ settles to
equilibrium before $\bu(\cdot)$ changes its value. In particular
integral of a binary-valued measurable function of $\bmm(\cdot)$
is well-approximated by integrating an appropriate equilibrium
probability. Provided that $\bu(t)$ remains close to $\bv(t)$, the
process $\bm^{\bv(t)}(\cdot)$ mimics a slowed-down version of
$\bmm(\cdot)$ observed around time~$t$; hence the alluded
equilibrium distribution pertains to $\bm^{\bv(t)}(\cdot)$.

Specification of $\bm^{\bv(t)}(\cdot)$ requires inclusion of
$\infty$ since entries of $\bmm(\cdot)$ can be as large as $n$.
\mbox{Compactifying} the augmented state-space $\Omega$ of
$\bmm(\cdot)$ via choice of the metric $\rho_o$ leads to the
representation~(\ref{eq.lcq.1}) of a limit trajectory
$\bv(\cdot)$, but it also entails ambiguity in that
representation. Namely, Theorem~\ref{thm.lcq.0} does not specify
which equilibrium distribution for $\omega^{\bv(t)}(\cdot)$ should
be adopted in~(\ref{eq.lcq.1}). While a full account of
equilibrium distributions of $\omega^{\bv(t)}(\cdot)$ appears
difficult, an important feature of the right distribution can be
identified:

\begin{lemma}\label{lemma.aux-lcq}
Let $\bv(\cdot)$ and $\pi_{\bv(\cdot)}$ be as specified by
Theorem~\ref{thm.lcq.0}. Given $k=1,2,\cdots$
\[\pi_{\bv(t)}\left(\omega_{k}\in \mathbb{Z}_+  \mbox{ and }
\omega_{k+1}=0\right)=1 \] for almost all $t$ such that
$v_k(t)=0$.
\end{lemma}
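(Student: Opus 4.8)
The plan is to exploit the fact that when $v_k(t)=0$ the coordinates above level $k$ of the fast process $\omega^{\bv(t)}(\cdot)$ receive no input, so that the equilibrium $\pi_{\bv(t)}$ is forced onto configurations that are empty above level $k$. First I would record the monotonicity consequence that $v_k(t)=0$ implies $v_j(t)=0$ for all $j\ge k$, since $1=v_0\ge v_1\ge\cdots\ge 0$. Hence in the generator~(\ref{eq.w.generator}) every birth rate $\lambda(v_{j-1}-v_j)$ with $j\ge k+1$ vanishes, because then $v_{j-1}=v_j=0$. Consequently the coordinates $\{\omega_j\}_{j\ge k+1}$ evolve autonomously (each death rate $(1-q)^{\omega_{j+1}}-(1-q)^{\omega_j}$ depends only on coordinates $\ge k+1$) and can only decrease: this sub-process is a pure-death process whose only absorbing states are those with $\omega_{k+1}=\omega_{k+2}=\cdots$ equal to a common value $W\in\mathbb{Z}_+\cup\{\infty\}$.

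Next I would convert the stationarity of $\pi_{\bv(t)}$ for $\omega^{\bv(t)}(\cdot)$ into pointwise information using the identity $\sum_{\bm\in\Omega}G^f(\bm,\bv(t))\pi_{\bv(t)}(\bm)=0$ established in the proof of Theorem~\ref{thm.lcq.0}, valid for every $f$ depending on finitely many coordinates. Taking $f(\bm)=(1-q)^{\omega_j}$ for a fixed $j\ge k+1$, the birth terms are killed by the vanishing rates above level $k$, and $f$ changes only under a move of the $j$th coordinate, so that $G^f(\bm,\bv(t))=\frac{q}{1-q}(1-q)^{\omega_j}\left((1-q)^{\omega_{j+1}}-(1-q)^{\omega_j}\right)$. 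Since $\omega_{j+1}\le\omega_j$ the integrand is nonnegative, so stationarity forces it to vanish $\pi_{\bv(t)}$-a.s., giving $\omega_{j+1}=\omega_j$ wherever $\omega_j$ is finite. This recovers, from stationarity alone, that $\pi_{\bv(t)}$ is carried by the constant-tail configurations of the previous paragraph.

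The decisive step is to eliminate $W\ge 1$ (and $W=\infty$). Here I would use that $\pi_{\bv(t)}$ is not an arbitrary equilibrium but the density of the limit $\chi=\lim\xi$, where $\xi$ is supported on the \emph{finitely supported} occupation states of the physical process $\bmm(\cdot)=n\bu(\cdot)$. A constant positive tail requires infinitely many coordinates equal to $W$, which is unreachable from a finitely supported configuration under the pure-death dynamics above level $k$; so one expects $\pi_{\bv(t)}(\omega_{k+1}=0)=1$. To make this quantitative I would fix an interval $[t_0,t_1]$ on which $v_k\equiv 0$ and show that $\xi([t_0,t_1]\times\{\omega_{k+1}\ge 1\})=\int_{t_0}^{t_1}1\{m_{k+1}(s)\ge 1\}\,ds\to 0$, whence $\pi_{\bv(t)}(\omega_{k+1}\ge 1)=0$ for almost all such $t$. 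The estimate rests on a time-scale separation: the arrival rate into level $k+1$ is $n\lambda(u_k-u_{k+1})=o(n)$ since $u_k\to 0$, whereas the departure rate from level $k+1$ is of order $n$ whenever $m_{k+1}\ge 1$, so the excursions of $m_{k+1}$ above zero have length $O(1/n)$ and occupy vanishing total time. Once $\omega_{k+1}=0$ a.s.\ is established, the companion claim $\omega_k<\infty$ a.s.\ is routine: conditionally on $\omega_{k+1}=0$ the coordinate $\omega_k$ is a birth-death chain with constant birth rate $\lambda v_{k-1}\le\lambda<1$ and death rate $1-(1-q)^{\omega_k}\to 1$, hence positive recurrent.

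I expect the quantitative excursion estimate to be the main obstacle. The single-coordinate test functions above produce zero equilibrium drift irrespective of $W$, so they cannot by themselves distinguish $W=0$ from $W\ge 1$ or $W=\infty$; the needed information comes only from the physical origin of $\pi_{\bv(t)}$, that is, from controlling excursions of $m_{k+1}$ uniformly enough in $n$ to interchange the limits $n\to\infty$ and $j\to\infty$. The difficulty is sharpened by the fact that $u_k\to 0$ does \emph{not} bound $m_k=nu_k$, which may diverge; one must still argue that the order-$n$ departure rate dominates the arrival rate, most cleanly by treating the levels above $k$ from the top downward, since a queue can reach level $k+2$ only after occupying the already-rare level $k+1$.
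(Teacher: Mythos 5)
Your first two paragraphs are correct as far as they go, but, as you concede yourself, they cannot close the argument, and the step you then declare ``routine'' is precisely where the real content lies: this is a genuine gap. On the compactified state space $\Omega$ the value $\infty$ is an isolated, hence absorbing, state for each coordinate, so the chain $\omega^{\bv(t)}(\cdot)$ is reducible (the paper says so explicitly), and for birth rate $\lambda v_{k-1}(t)<1$ every mixture $\theta\delta_\infty+(1-\theta)\nu$ of the point mass at $\omega_k=\infty$ with the positive-recurrent law $\nu$ on $\mathbb{Z}_+$ is stationary. Positive recurrence of the $\mathbb{Z}_+$-restricted chain can therefore never exclude $\theta>0$; indeed Theorem~\ref{thm.lcq.0}.a shows that when $v_k(t)>0$ the correct equilibrium has $\theta=1$, so whether mass sits at $\infty$ is decided only by prelimit information --- exactly the same phenomenon as your $W\ge 1$ problem one level up, which you recognized there but overlooked for coordinate $k$. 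The paper spends half of its proof on this point: taking expectations in the dynamics of $s_l(t)=\sum_{i=l}^\infty m_i(t)$ gives identity~(\ref{eq.lemma.lcq0.newaux1}), and evaluating it at $l=k$, where $n^{-1}E_n[s_k(t)]\to 0$, yields for every open $B\subset[t_0,t_1]$ that $\limsup_n E_n[\int_B(1-(1-q)^{m_k(t)})dt]=\limsup_n E_n[\int_B\lambda u_{k-1}(t)dt]<\int_B dt$, strictly, since $\lambda<1$. Because $1-(1-q)^{\omega_k}$ is continuous on $\Omega$ and equals $1$ at $\omega_k=\infty$, while balance of up- and down-flux pins the expected death rate of the finite part of the equilibrium to the birth rate $\lambda v_{k-1}(t)$, this strict deficit rules out any atom at $\infty$. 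Your single-coordinate test functions cannot see this (they have zero stationary drift for every tail value $W$, as you note), and no recurrence property of the limit chain substitutes for it.

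For the other half, your target coincides with the paper's claim that $\int_{t_0}^{t_1}1\{m_{k+1}(t)\ge 1\}dt\to 0$, but the excursion-duration estimate you flag as the main obstacle is bypassed entirely: the same identity~(\ref{eq.lemma.lcq0.newaux1}) at $l=k+1$, together with $n^{-1}E_n[s_{k+1}(t)]\to\sum_{i=k+1}^\infty v_i(t)=0$ and $u_k\to v_k=0$, gives $E_n[\int_{t_0}^{t_1}(1-(1-q)^{m_{k+1}(t)})dt]\to 0$, and the pointwise bound $1-(1-q)^{m_{k+1}(t)}\ge q\,1\{m_{k+1}(t)\ge 1\}$ finishes. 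This is the expectation form of your rate-balance heuristic; it needs no control of excursion lengths, no uniformity in $n$, and no top-down induction over levels, because the aggregate $s_{k+1}$ counts all levels above $k$ at once and only $u_k\to 0$ is used --- the possible divergence of $m_k=nu_k$, which you rightly worried about, never enters. With that identity in hand, your autonomy and constant-tail paragraphs become dispensable scaffolding.
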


Lemma~\ref{lemma.aux-lcq} will be instrumental in obtaining a
sharper description for $\bv(\cdot)$, yet an informal explanation
may still be useful in putting it in perspective with the queueing
system of interest. Note that if $v_k(t)=0$ and $v_{k-1}(t)>0$
then $\bv(t)$ reflects a distribution with support
$\{0,1,\cdots,k-1\}$. This property does not immediately translate
into a bound on the maximum queue length in the system, since
$\bv(t)$ is the limit of $\bu(t)=n^{-1}\bmm(t)$ and so the number
of queues with at least $i\ge k$ packets, $m_i(t)$, is $o(n)$ as
$n\rightarrow\infty$. By way of interpreting $\bm^{\bv(t)}(\cdot)$
as a proxy to $\bmm(\cdot)$ around time $t$,
Lemma~\ref{lemma.aux-lcq} indicates that the maximum queue size is
at most one larger than what is deduced from $\bv(t)$ and that the
number of maximal queues is $O(1)$ as $n\rightarrow\infty$.

\begin{proofl}{\bf\ref{lemma.aux-lcq}}
Let $[t_0,t_1]$ be an interval such that $v_k(t)=0$ for
$t\in[t_0,t_1]$. We prove the lemma by showing that as
$n\rightarrow\infty$ along the convergent subsequence of interest
\bqa \xi([t_0,t_1]\times \{\bm:w_{k+1}=0\})
~=~\int_{t_0}^{t_1}1\{m_{k+1}(t)\ge 1\}dt &\rightarrow& 0,
\label{eq.lemma.lcq0.claim1} \\
\xi([t_0,t_1]\times \{\bm:w_k\in Z_+\})
~=~\int_{t_0}^{t_1}1\{m_{k}(t)\in Z_+\}dt &\rightarrow& t_1-t_0.
\label{eq.lemma.lcq0.claim2} \eqa For each integer $l$ and time
$t$ let $s_{l}(t)\triangleq\sum_{i=l}^\infty m_i(t)$. This
quantity increases when some queue with size at least $l-1$
receives a packet, and it decreases when transmission is scheduled
from some queue with size at least $l$. Given $\bu(t)$, these
events occur at respective instantaneous rates $n\lambda
u_{l-1}(t)$ and $n\left(1-(1-q)^{m_{l}(t)}\right)$. Therefore \beq
E_n[s_{l}(t_1)-s_{l}(t_0)]~=~nE_n\left[\int_{t_0}^{t_1}\lambda
u_{l-1}(t)-\left(1-(1-q)^{m_{l}(t)}\right)dt\right].
\label{eq.lemma.lcq0.newaux1} \eeq Consider this equality for
$l=k+1$. By choice of the interval $[t_0,t_1]$ \[
n^{-1}E_n[s_{k+1}(t)]~\rightarrow~\sum_{i=k+1}^\infty v_i(t)~=~0
\] and $u_k(t)\rightarrow v_k(t)=0$ for all $t\in[t_0,t_1]$.
Consequently 
\[E_n\left[\int_{t_0}^{t_1}\left(1-(1-q)^{m_{k+1}(t)}\right)dt\right]
\rightarrow 0.\] This leads to~(\ref{eq.lemma.lcq0.claim1}) since
\[ 1-(1-q)^{m_{k+1}(t)}~\ge~ q1\{m_{k+1}(t)\ge 1\}.\]
To complete the proof, note that $n^{-1}E_n[s_{k}(t)]\rightarrow
0$ for all $t\in [t_0,t_1]$; therefore
(\ref{eq.lemma.lcq0.newaux1}) evaluated at $l=k$ implies that for
any open subset $B\subset [t_0,t_1]$
\[\limsup_{n\rightarrow\infty}E_n\left[\int_{B}\left(1-(1-q)^{m_{k}(t)}\right)dt\right]
~=~\limsup_{n\rightarrow\infty}E_n\left[\int_{B}\lambda
u_{k-1}(t)dt\right]~<~\int_{B}dt. \] The last inequality is strict
since $\lambda<1$. Arbitrariness of $B$
implies~(\ref{eq.lemma.lcq0.claim2}).
\end{proofl}

Given positive integer $K$ let $U_K=\{\bu\in U: u_k=0\mbox{ for }
k\ge K\}$.

\begin{theorem}\label{thm.lcq.1}
Let $\bv(\cdot)$ and $\pi_{\bv(\cdot)}$ be as specified by
Theorem~\ref{thm.lcq.0} with initial state $\bv(0)\in U_K$ for
some $K$. Then for $t>0$

a) $\bv(t)\in U_K$ and \[ \pi_{\bv(t)}\left(\omega_{K(t)}\in
\mathbb{Z}_+ \mbox{ and } \omega_{K(t)+1}=0\right)=1
\label{eq.thm.lcq.1.claim1} \] where $K(t)=\min\{k:v_j(t)=0 \mbox{
for } j\ge k\}$.

b) \[
\frac{d}{dt}v_k(t)=\left\{\begin{array}{ll}
\lambda v_{k-1}(t)-1<0&\mbox{ if } k=K(t)-1\\
0&\mbox{ if } k\ge K(t). \end{array}\right.
\label{eq.thm.lcq.1.claim2} \] In particular $v_k(t)=0$ for
$k>0$ and $t>K/(1-\lambda)$.
\end{theorem}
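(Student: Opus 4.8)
The plan is to reduce everything to the behavior of the tail sums $W_k(t)=\sum_{i\ge k}v_i(t)$, which separate the arrival and fast-equilibrium departure contributions cleanly. The first task is to pin down the structure of $\pi_{\bv(t)}$. Combining the property $\pi_{\bv(t)}(\omega_j=\infty)=1$ for $v_j(t)>0$ from Theorem~\ref{thm.lcq.0}(a), the conclusion of Lemma~\ref{lemma.aux-lcq}, and the monotonicity $\omega_j\ge\omega_{j+1}$, I expect that for almost all $t$ the measure $\pi_{\bv(t)}$ is carried by sequences with $\omega_j=\infty$ for $j<K(t)$, with $\omega_{K(t)}\in\mathbb{Z}_+$, and with $\omega_j=0$ for $j>K(t)$ (here I use that $v_j(t)>0$ for $j\le K(t)-1$ by monotonicity of $\bv$). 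In particular the only nondegenerate coordinate is $\omega_{K(t)}$, which under the generator~(\ref{eq.w.generator}) evolves as an \emph{autonomous} birth--death chain on $\mathbb{Z}_+$ with constant birth rate $\lambda v_{K(t)-1}(t)$ and death rate $1-(1-q)^{j}$ at state $j$, since its two neighboring coordinates are frozen at $\infty$ and $0$.

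Next I would sum the identity of Theorem~\ref{thm.lcq.0}(b) over $i\ge k$ (working in integral form to justify interchanging the sum with $d/dt$). The arrival terms telescope to $\lambda v_{k-1}(t)$, and the departure terms telescope inside the expectation---they are nonnegative because $\omega_{i+1}\le\omega_i$, so monotone convergence applies---to $E_{\pi_{\bv(t)}}[1-(1-q)^{\omega_k}]$, using $(1-q)^{\omega_i}\to 1$ as $i\to\infty$. This gives, for almost all $t$,
\[ \frac{d}{dt}\sum_{i\ge k}v_i(t)=\lambda v_{k-1}(t)-1+E_{\pi_{\bv(t)}}[(1-q)^{\omega_k}]. \]
The crucial step is to evaluate the expectation. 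When $v_k(t)>0$ we have $\omega_k=\infty$ a.s., so it vanishes and the right-hand side is $\lambda v_{k-1}(t)-1\le\lambda-1<0$. When $v_k(t)=0$, Lemma~\ref{lemma.aux-lcq} makes $\omega_k$ a finite coordinate with $\omega_{k+1}=0$, so the marginal of $\pi_{\bv(t)}$ on $\omega_k$ is stationary for the above birth--death chain; equating its stationary expected birth and death rates (the standard balance $\sum_j\pi(j)b(j)=\sum_j\pi(j)d(j)$) yields $\lambda v_{k-1}(t)=E_{\pi_{\bv(t)}}[1-(1-q)^{\omega_k}]$, hence $E_{\pi_{\bv(t)}}[(1-q)^{\omega_k}]=1-\lambda v_{k-1}(t)$ and the right-hand side is exactly $0$.

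With these two cases in hand both parts follow. For part~(a), note $W_K(t)=0$ iff $v_K(t)=0$ (by monotonicity), so $W_K$ is nonnegative and absolutely continuous with $W_K(0)=0$, derivative $0$ wherever $W_K=0$, and strictly negative derivative wherever $W_K>0$; a barrier argument---if $W_K(t_1)>0$, integrate the strictly negative derivative back to the last zero $t^*<t_1$ to get $W_K(t_1)<W_K(t^*)=0$, contradicting $W_K\ge0$---forces $W_K\equiv0$, i.e. $\bv(t)\in U_K$. For part~(b) I would write $\frac{d}{dt}v_k=\frac{d}{dt}\sum_{i\ge k}v_i-\frac{d}{dt}\sum_{i\ge k+1}v_i$: when $k\ge K(t)$ both $v_k$ and $v_{k+1}$ vanish, so both tail derivatives are $0$; when $k=K(t)-1$ the first tail derivative is $\lambda v_{k-1}(t)-1$ (as $v_{K(t)-1}>0$) and the second is $0$ (as $v_{K(t)}=0$), giving $\frac{d}{dt}v_{K(t)-1}=\lambda v_{K(t)-2}(t)-1<0$. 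Finally the draining bound comes from the total occupancy $W_1(t)$: since $v_0\equiv1$, whenever $W_1>0$ its derivative equals $\lambda-1$, a constant; because $W_1(0)=\sum_{k=1}^{K-1}v_k(0)\le K-1$, it reaches $0$ by time $(K-1)/(1-\lambda)<K/(1-\lambda)$.

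The step I expect to be the genuine obstacle is the evaluation of $E_{\pi_{\bv(t)}}[(1-q)^{\omega_k}]$ when $v_k(t)=0$: this is where the separation of time scales forces an exact cancellation between arrivals and departures at the top occupied level, and it hinges on correctly identifying $\omega_{K(t)}$ as an autonomous positive-recurrent birth--death chain, so that Lemma~\ref{lemma.aux-lcq} supplies finiteness and the stationary balance applies. The remaining bookkeeping---termwise differentiation of the tail sums, the monotone-convergence telescoping, and carrying the almost-everywhere qualifiers inherited from Theorem~\ref{thm.lcq.0} and Lemma~\ref{lemma.aux-lcq}---is routine by comparison.
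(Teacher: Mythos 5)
Your proposal is correct and is essentially the paper's own argument: both rest on the support structure of $\pi_{\bv(t)}$ supplied by Theorem~\ref{thm.lcq.0}(a) and Lemma~\ref{lemma.aux-lcq}, together with the flow-balance identity $E_{\pi_{\bv(t)}}[(1-q)^{\omega_{K(t)}}]=1-\lambda v_{K(t)-1}(t)$ obtained by viewing the top occupied coordinate as a positive-recurrent birth--death chain in equilibrium, which the paper substitutes directly into (\ref{eq.lcq.1}) at $k=K(t)-1$ and $k=K(t)$. Your tail-sum bookkeeping and the barrier argument for $W_K$ are cosmetic repackagings of the paper's coordinatewise computation and of its assertion that $K(t)$ is finite and nonincreasing (indeed both write-ups equally gloss the corner case $K(t)=\infty$, which your telescoping step $(1-q)^{\omega_i}\rightarrow 1$ implicitly excludes).
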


\begin{proof} Let $t$ be an instant such that $K(t)<\infty$.
Lemma~\ref{lemma.aux-lcq} implies that \beq
\pi_{\bv(t)}\left(\omega_{K(t)}\in \mathbb{Z}_+ \mbox{ and }
\omega_{K(t)+1}=0\right)=1. \label{eq.thm.lcq.1.aux1} \eeq In
particular the coordinate process $\bm_{K(t)}^{\bv(t)}(\cdot)$
possesses an equilibrium in $\mathbb{Z}_+$. The process should
have equal rates of up-jumps and down-jumps in that equilibrium,
namely \beq E_{\pi_{\bv(t)}}[(1-q)^{\omega_{K(t)}}]=1-\lambda
v_{K(t)-1}(t). \label{eq.thm.lcq.1.aux2} \eeq Since
$v_{K(t)-1}(t)>0$ by definition of $K(t)$,
Theorem~\ref{thm.lcq.0}.a implies that \beq
E_{\pi_{\bv(t)}}[(1-q)^{\omega_{K(t)-1}}]=0.
\label{eq.thm.lcq.1.aux3} \eeq Substituting
(\ref{eq.thm.lcq.1.aux2}) and (\ref{eq.thm.lcq.1.aux3}) in
equality~(\ref{eq.lcq.1}) evaluated at $k=K(t)-1$ yields \beq
\frac{d}{dt}v_{K(t)-1}(t)~=~\lambda v_{K(t)-2}(t)-1~<~0.
\label{eq.thm.lcq.1.aux4} \eeq Note also that
$E_{\pi_{\bv(t)}}[(1-q)^{\omega_{K(t)+1}}]=1$ due to
(\ref{eq.thm.lcq.1.aux1}); hence equality~(\ref{eq.lcq.1}) for
$k=K(t)$ gives \beq \frac{d}{dt}v_{K(t)}(t)~=~-\lambda
v_{K(t)}(t)~=~0. \label{eq.thm.lcq.1.aux5} \eeq Since
$K(0)=K<\infty$ by hypothesis, it follows via
(\ref{eq.thm.lcq.1.aux4}) and (\ref{eq.thm.lcq.1.aux5}) that
$K(t)$ is finite and nonincreasing in $t$. Part (a) of the theorem
now follows by (\ref{eq.thm.lcq.1.aux1}). Part (b) is due to
(\ref{eq.thm.lcq.1.aux4}) and (\ref{eq.thm.lcq.1.aux5}).
\end{proof}

\begin{cor}\label{cor.lcq.2}
If $\bu(0)\in U_K$ for some $K$ then \beq
\lim_{n\rightarrow\infty}P_n(m_1(t)\in \mathbb{Z}_+,m_2(t)=0)=1
\label{eq.cor.lcq.2} \eeq for $t\ge K/(1-\lambda)$. The system
occupancy $\sum_{k=1}^\infty m_k(t)$ converges in distribution to
the equilibrium value of a birth-death process with constant birth
rate $\lambda$ and death rate $1-(1-q)^j$ at state $j$.
\end{cor}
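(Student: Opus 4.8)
The plan is to deduce Corollary~\ref{cor.lcq.2} from Theorem~\ref{thm.lcq.1} in two stages: first pass from the fluid/measure-valued limit to a statement about the finite-$n$ process (this gives the maximal-queue claim~(\ref{eq.cor.lcq.2})), and then identify the equilibrium law of the total occupancy as that of the stated birth-death process. The hypothesis $\bu(0)\in U_K$ lets me invoke Theorem~\ref{thm.lcq.1}, which says that any subsequential limit $\bv(\cdot)$ of $\bu(\cdot)$ satisfies $v_k(t)=0$ for all $k>0$ once $t>K/(1-\lambda)$, and moreover $\pi_{\bv(t)}(\omega_1\in\mathbb{Z}_+\text{ and }\omega_2=0)=1$ at such $t$ (here $K(t)=1$). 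Since $v_k(t)=0$ translates, via $v_k=\lim u_k=\lim n^{-1}m_k$, into $m_k(t)=o(n)$, the force of the corollary is that the residual $o(n)$ mass is in fact $O(1)$ and concentrated on $m_1$, with $m_2(t)=0$ in the limit.

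First I would establish~(\ref{eq.cor.lcq.2}). The statement $\pi_{\bv(t)}(\omega_2=0)=1$ is precisely the measure-valued encoding of the event $m_2(t)=0$: by the very definition of $\xi$ and its limit $\chi$, for any interval $[t_0,t_1]$ with $t_0\ge K/(1-\lambda)$ we have $\int_{t_0}^{t_1}1\{m_2(t)\ge 1\}\,dt=\xi([t_0,t_1]\times\{\omega_2\ge 1\})\to\chi([t_0,t_1]\times\{\omega_2\ge1\})=\int_{t_0}^{t_1}\pi_{\bv(t)}(\omega_2\ge1)\,dt=0$, exactly as in the computation~(\ref{eq.lemma.lcq0.claim1}) of Lemma~\ref{lemma.aux-lcq}. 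This forces $P_n(m_2(t)\ge1)\to0$ for almost every such $t$, and similarly $\pi_{\bv(t)}(\omega_1\in\mathbb{Z}_+)=1$ gives $m_1(t)<\infty$ in the limit; combined with $m_2(t)=0$ this yields~(\ref{eq.cor.lcq.2}). To upgrade ``almost every $t$'' to ``every $t\ge K/(1-\lambda)$'' I would appeal to the monotonicity and finiteness of $K(t)$ from Theorem~\ref{thm.lcq.1} together with right-continuity of the trajectories in the Skorokhod space.

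Next I would identify the limiting law of $S(t)\triangleq\sum_{k\ge1}m_k(t)$. Once~(\ref{eq.cor.lcq.2}) holds, asymptotically every nonempty queue holds exactly one packet, so $S(t)=m_1(t)$ in the limit and the state of the system is effectively described by the single integer $m_1$. I would read off the birth-death dynamics directly from the generator~(\ref{eq.lcq.generator}): when $m_2=0$, the up-rate (an arrival to an empty queue, moving $m_1$ up by one) is $n\lambda u_0=n\lambda$ scaled to $\lambda$ after the time-change giving $\bm^{\bv(t)}(\cdot)$, and the down-rate (a departure from a queue of size exactly one, the maximum) is $n(1-(1-q)^{m_1})$, i.e.\ $1-(1-q)^j$ at state $j=m_1$. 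This matches the claimed birth rate $\lambda$ and death rate $1-(1-q)^j$. The cleanest way to make this rigorous is to show that $(S(t):t\ge0)$ converges in distribution, as $n\to\infty$, to this birth-death process by a generator-convergence argument on the reduced one-dimensional state space, and then pass to the stationary law using the ergodicity established in the proof of Theorem~\ref{thm.equilibrium} (adapted to LCQ, where $\bu(\cdot)$ is again irreducible and compact, hence ergodic) together with uniqueness of the equilibrium of the positive-recurrent birth-death chain.

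The main obstacle I anticipate is the last step: rigorously transferring the finite-time, in-probability statements about maximal queues into a statement about the \emph{equilibrium} distribution of $S$, and simultaneously justifying that the $o(n)$ mass above level one does not accumulate enough probability to perturb the birth-death description. The difficulty is an interchange-of-limits issue exactly analogous to the one resolved for LCQ($d$) in Theorem~\ref{thm.equilibrium}: I would reproduce that argument's structure, using Theorem~\ref{thm.lcq.1}'s relaxation time $K/(1-\lambda)$ to contract any initial distribution toward the regime $m_2=0$, invoke ergodicity to replace $\bu(t(\varepsilon))$ by $\bu^*$, and argue by contradiction against a subsequence on which the equilibrium law of $S$ stays bounded away from the birth-death equilibrium. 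Verifying that the death rate in the reduced process is genuinely $1-(1-q)^j$ rather than an averaged quantity relies on Lemma~\ref{lemma.aux-lcq} pinning $\omega_2=0$, so that the expectation $E_{\pi_{\bv(t)}}[(1-q)^{\omega_1}]$ collapses to $(1-q)^{m_1}$ on the support concentrated at the maximal level; making that collapse precise is the delicate point.
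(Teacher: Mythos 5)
Your first stage is sound and coincides with the paper's argument: the paper also proves (\ref{eq.cor.lcq.2}) by taking a convergent subsequence of $(\bu(\cdot),\xi)$, choosing $t_1>t_0>K/(1-\lambda)$ so that $\bv(t)=\{1,0,0,\cdots\}$ there, and using the occupation-measure convergence together with Theorem~\ref{thm.lcq.1}.a applied to the single set $A=\{\bm\in\Omega:\omega_1\in\mathbb{Z}_+,\ \omega_2=0\}$, for which $\int_{t_0}^{t_1}P_{n_i}(\bmm(t)\in A)\,dt\rightarrow\int_{t_0}^{t_1}\pi_{\bv(t)}(A)\,dt=t_1-t_0$.

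The gap is in your second stage, and it is twofold. First, your proposed generator-convergence argument for $(S(t):t\ge 0)$ on the reduced one-dimensional state space cannot work at the original time scale: $S(t)=\sum_{k\ge 1}m_k(t)$ is a \emph{fast} process, with up-rate $n\lambda$ and down-rate $n(1-(1-q)^{m_1(t)})$, so its generator blows up like $n$ and the process has no nontrivial Skorokhod limit in fixed time; your own parenthetical that the rates are ``scaled to $\lambda$ after the time-change'' concedes this, but after rescaling time by $n$ you lose contact with the fixed instants $t\ge K/(1-\lambda)$ unless you redo precisely the Kurtz averaging that Theorem~\ref{thm.lcq.0} already packages into $\pi_{\bv(t)}$. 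Second, the ergodicity/interchange-of-limits detour modeled on Theorem~\ref{thm.equilibrium} addresses a statement the corollary does not make: the corollary asserts convergence of the law of $\sum_k m_k(t)$ at each fixed $t\ge K/(1-\lambda)$ as $n\rightarrow\infty$ (the word ``equilibrium'' refers to the limiting birth-death process, not to the finite-$n$ stationary law), so no ergodicity of the $n$-queue system and no $t\rightarrow\infty$ interchange is needed. The paper finishes in one step that your plan replaces with unnecessary machinery: for $t>K/(1-\lambda)$ one has $\bv(t)=\{1,0,0,\cdots\}$ and, by Theorem~\ref{thm.lcq.1}.a, $\omega_2\equiv 0$ under $\pi_{\bv(t)}$; inspecting the generator~(\ref{eq.w.generator}) then shows that $\omega_1^{\bv(t)}(\cdot)$ is itself a positive recurrent birth-death process with birth rate $\lambda(v_0(t)-v_1(t))=\lambda$ and death rate $(1-q)^{0}-(1-q)^{j}=1-(1-q)^j$ at state $j$, so $\pi_{\bv(t)}$ restricted to the first coordinate \emph{is} the claimed equilibrium law, and the same occupation-measure convergence used in your first stage, applied to sets depending on $\omega_1$, transfers it to $m_1(t)=\sum_{k\ge 1}m_k(t)$ on the high-probability event $m_2(t)=0$. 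Relatedly, your closing sentence about $E_{\pi_{\bv(t)}}[(1-q)^{\omega_1}]$ ``collapsing to $(1-q)^{m_1}$'' conflates the auxiliary variable $\omega_1$ with the physical $m_1$; the correct mechanism is not a collapse of an expectation but the exact identification of $\pi_{\bv(t)}$ just described.
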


\begin{proof}
Let $\{n_i\}$ be a subsequence along which $(\bu(\cdot),\xi)$
converges and let $(\bv(\cdot),\chi)$ denote the limit. Since
$\bu(0)\in U_K$ it follows that $\bv(0)\in U_K$. Choose
$t_1>t_0>K/(1-\lambda)$ so that by Theorem~\ref{thm.lcq.1}.b
$\bv(t)=\{1,0,0,0,\cdots\}$ for $t\in [t_0,t_1]$. Let
$A=\{\bm\in\Omega:\omega_1\in \mathbb{Z}_+, w_2=0\}$. Then
\[\int_{t_0}^{t_1} P_{n_i}(\bmm(t)\in A)dt
~=~E_{n_i}\left[\int_{t_0}^{t_1} 1\{\bmm(t)\in A\}dt \right]
~\rightarrow~ \int_{t_0}^{t_1}\pi_{\bv(t)}(A)dt~=~t_1-t_0,
\] where the last equality is due to
Theorem~\ref{thm.lcq.1}.a. The above limit does not depend on the
particular subsequence $\{n_i\}$; therefore (\ref{eq.cor.lcq.2})
follows. The final claim of the corollary is verified by observing
that for $t>K/(1-\lambda)$ the coordinate process
$\omega_2^{\bv(t)}\equiv 0$ in equilibrium; and in turn
$\omega_1^{\bv(t)}$ is a positive recurrent birth-death process on
$\mathbb{Z}_+$ with birth rate $\lambda$ and death rate
$1-(1-q)^j$ at state $j$.
\end{proof}

It should be noted that the hypothesis $\bu(0)\in U_K$ is
necessary for the conclusions of Corollary~\ref{cor.lcq.2}: If the
initial size of a single queue is allowed to grow without bound
with increasing $n$ then, for large values of $n$, that queue
receives service whenever it is connected. In effect this reduces
the service rate available to the rest of the system by a factor
of $(1-q)$. In such degenerate cases the present analysis applies
to the subsystem that is composed of queues with bounded initial
occupancies, after appropriate adjustment of the service rate.

\section{Final remarks: LCQ($d_n$)}\label{sec.conc}

Conclusions of Sections~\ref{sec.LCQ-d} and \ref{sec.LCQ} reveal
that the system occupancies under LCQ($d$) and LCQ differ by a
factor of order $n$ as $n\rightarrow\infty$. More insight on this
disparity, especially for moderate values of $d$ relative to $n$,
can be gained by considering an asymptotic regime in which $d$ is
allowed to depend on~$n$. Here we sketch asymptotic analysis of
LCQ($d_n$) in the case
\[\lim_{n\rightarrow\infty} d_n=\infty \mbox{ ~~~and~~~ }
\lim_{n\rightarrow\infty} \frac{d_n}{n}=0.\] The present
discussion closely follows that of Section~\ref{sec.LCQ}, hence
proofs are omitted.

Under LCQ($d_n$) the representation~(\ref{eq.MGrepForU}) can be
expressed as
\[u_k(t)=u_k(0)+\int_0^t\left(\lambda(u_{k-1}(s)-u_k(s))-
\left((1-\frac{b_{k+1}(s)}{d_n})^{d_n}-(1-\frac{b_k(s)}{d_n})^{d_n}\right)\right)ds
+\varepsilon_k(t)\] where $b_k(t)\triangleq d_nu_{k}(t)$. Let
$\bb(t)=\{b_k(t)\}_{k=0}^\infty$ and let $\Omega_o$ be obtained by
augmenting $\Omega$ with nonincreasing sequences that take values
in $\mathbb{R}_+\cup\{\infty\}$. Define the random measure $\xi_o$
by
\[\xi_o([0,t)\times A)=\int_0^t1\{\bb(s)\in A\}ds,~~~~~~~t>0,~A\in \mathcal{B}(\Omega_o).\]
Consideration of the pair $(\bu(\cdot),\xi_o)$ via an analogue of
Theorem~\ref{thm.lcq.0} identifies possible limits $\bv(\cdot)$ of
$\bu(\cdot)$ as solutions to
\[ \frac{d}{dt}v_k(t)=\lambda(v_{k-1}(t)-v_k(t)) -
E_{\pi_{\bv(t)}}[e^{-\omega_{k+1}}-e^{-\omega_{k}}],~~~~~~~~k=1,2,\cdots\]
where $\pi_{\bv(t)}$ is a distribution on $\Omega_o$ such that
$\pi_{\bv(t)}(\omega_k=\infty)=1$ if $v_{k}(t)>0$ and
\[E_{\pi_{\bv(t)}}\left[1\{\omega_k\ne\infty\}\left(\lambda(v_{k-1}(t)-v_k(t))+e^{-\omega_{k}}-e^{-\omega_{k+1}}\right)\right]=0.\]
The line of reasoning employed in establishing
Lemma~\ref{lemma.aux-lcq} and Theorem~\ref{thm.lcq.1} readily
applies to $\bv(\cdot)$ and $\pi_{\bv(\cdot)}$ here, yielding that
\[\pi_{\bv(t)}\left(\omega_{k}\in \mathbb{R}_+  \mbox{ and }
\omega_{k+1}=0\right)=1~~~\mbox{ if } v_k(t)=0, \] and that
$v_1(t)=0$ for $t>K(0)/(1-\lambda)$. In turn for such $t$,
$b_1(t)=O(1)$ and $b_2(t)=o(1)$ as $n\rightarrow\infty$. The
maximum queue size in equilibrium therefore tends to one, but the
number of queues at that occupancy is substantially larger than
the same number under the LCQ policy. In particular for large
enough values of $t$ the total system occupancy $\sum_{k\ge 1}
m_k(t)=(n/d_n)\sum_{k\ge 1}b_k(t)$ is $O(n/d_n)$.

\end{document}